\newtheorem{lemma}{Lemma}[section]
\newtheorem{corollary}[lemma]{Corollary}
\newtheorem{condition}[lemma]{Condition}
\newtheorem{theorem}[lemma]{Theorem}
\newtheorem{remark}[lemma]{Remark}
\newcommand{\var}{\textsf{Var}}
\newcommand{\varP}{\textsf{Var}}
\newcommand{\cov}{\textsf{Cov}}
\newcommand{\cor}{\textsf{Cor}}
\newcommand{\covP}{\textsf{Cov}}
\newcommand{\I}{\mathbf{I}}
\newcommand{\E}{\textsf{E}}
\newcommand{\EP}{\textsf{E}}
\newcommand{\PP}{\textsf{P}}
\newcommand{\pp}{\textsf{P}}
\newcommand{\CalD}{\mathcal{D}}
\newcommand{\e}{\mathcal{E}}
\newcommand{\V}{\mathcal{V}}
\newcommand{\ba}{\textbf{a}}
\newcommand{\bb}{\textbf{b}}
\newcommand{\bc}{\textbf{c}}
\newcommand{\bd}{\textbf{d}}
\newcommand{\be}{\textbf{e}}
\numberwithin{equation}{section}
\theoremstyle{plain}
\begin{document}

\begin{frontmatter}

\title{Graph-based two-sample tests for data with repeated observations}
\runtitle{Graph-based two-sample tests}

\begin{aug}
\author{\fnms{Jingru} \snm{Zhang}\ead[label=e1]{jingruzhang@pku.edu.cn}}
\and
\author{\fnms{Hao} \snm{Chen}\ead[label=e2]{hxchen@ucdavis.edu}}
\runauthor{Zhang and Chen}
\address{Beijing International Center for Mathematical Research \\
and University of California, Davis}
\end{aug}

\begin{abstract}
In the regime of two-sample comparison, tests based on a graph constructed on observations by utilizing similarity information among them is gaining attention due to their flexibility and good performances for high-dimensional/non-Euclidean data. 
However, when there are repeated observations, these graph-based tests could be problematic as they are versatile to the choice of the similarity graph. We propose extended graph-based test statistics to resolve this problem.  The analytic $p$-value approximations to these extended graph-based tests are derived to facilitate the application of these tests to large datasets. The new tests are illustrated in the analysis of a phone-call network dataset.  All tests are implemented in an \texttt{R} package \texttt{gTests}.
\end{abstract}

\begin{keyword}
\kwd{high-dimensional data}
\kwd{network data}
\kwd{nonparametric test}
\kwd{non-Euclidean data}
\kwd{similarity graph}
\kwd{ties in distance}
\end{keyword}

\end{frontmatter}

\section{Introduction}
\label{sec:intro}

Two-sample comparison is a fundamental problem in statistics and has been extensively studied for univariate data and low-dimensional data. The testing problem for high-dimensional data and non-Euclidean data, such as network data, is gaining more and more attention in this big-data era.  In the parametric domain, for multivariate data, many endeavors have been made in testing whether the means are the same (see for examples \cite{bai1996, srivastava2008, chensongxi2010, tony2014, xu2016}) and whether the covariance matrices are the same (see for examples \cite{schott2007, srivastava2010, li2012, tony2013, xia2015}).   Parametric tests were proposed for non-Euclidean data as well.  For example, \cite{tang2017semiparametric} proposed a test for two random dot product graphs based on the spectral decomposition of the adjacency matrix.  These parametric methods provide useful tools, but they are often restrictive and not robust enough if model assumptions are violated. 

In the nonparametric domain, efforts had been made in extending the Kolmogorov-Smirnov test, the Wilcoxon rank test, and the Wald-Wolfowitz runs test to high-dimensional data (see \cite{chen2017new} for a review).  Among these efforts, the first practical test was proposed by \cite{friedman1979} as an extension of the Wald-Wolfowitz runs test to multivariate data.  They pool the observations from the two samples together and construct a minimum spanning tree (MST), which is a spanning tree that connects all observations with the sum of the distances of the edges in the tree minimized.  They then count the number of edges in the MST that connects observations from different samples, and reject the null hypothesis of equal distribution if this count is significantly \emph{smaller} than its expectation under the null hypothesis.   
This test later has been extended to other similarity graphs where observations closer in distance are more likely to be connected than those farther in distance, such as the minimum distance pairing (MDP) where the observations are paired in such a way that the sum of the distances within pairs is minimized \citep{rosenbaum2005}, and the nearest neighbor graph (NNG) where each observation connects to its nearest neighbor \citep{schilling1986, henze1988}.  We call this type of tests the \emph{edge-count test} for easy reference.
Recently, a generalized edge-count test and a weighted edge-count test were proposed to address the problems of the edge-count test under scale alternatives and under unequal sample sizes \citep{chen2017new, chen2018weighted}. Since these tests and the edge-count test are all based on a similarity graph, we call them the \emph{graph-based tests}.


The graph-based tests have many advantages: They can be applied to data with arbitrary dimension and to non-Euclidean data, and exhibit high power in detecting a variety of differences in distribution -- they have higher power than the likelihood-based tests when the dimension of the data is moderate to high for practical sample sizes, ranging from hundreds to millions.  In all the works relating the graph-based tests, the authors also provided analytic formulas to approximate the $p$-values of the corresponding test statistics, making the tests easy off-the-shelf tools for two-sample comparison in modern applications. However, the graph-based tests could be problematic for data with repeated observations. All these tests rely on a similarity graph constructed on the observations. 
When there are repeated observations, the similarity graph is not uniquely defined based on common optimization criteria, such as the MST or the MDP. Indeed, several graphs could be equally ``optimal'' in terms of the criterion.  

To illustrate this problem, we use a phone-call network dataset analyzed in both \cite{chen2017new} and \cite{chen2018weighted}. This dataset has 330 networks, corresponding to 330 consecutive days, respectively. Each network represents the phone-call activity among the same group of people on a particular day (a more detailed description of this dataset see in Section \ref{sec:network}). In both papers, the authors tested whether the distribution of phone-call networks on weekdays is the same as that on weekends. The distance between two networks is defined as the number of different edges between them. In this dataset, phone-call networks on some days are the same and the distance matrix on the distinct networks has ties. According to their results, the 9-MST\footnote{A $k$-MST is the union of the $1$st,$\cdots,k$th MSTs, where the 1st MST is the MST and the $j$th ($j>1$) MST is a spanning tree that connects all observations such that the sum of the edges in the tree is minimized under the constraint that it does not contain any edge in the $1$st,$\cdots,(j-1)$th MSTs.} was a good choice for the similarity graph. However, the 9-MST is not uniquely defined due to the repeated observations (networks) and the ties in the distance matrix.
We randomly selected four such 9-MSTs and the results of the generalized edge-count test ($S$) and the weighted edge-count test ($Z_w$) under each of the 9-MSTs are listed in Table \ref{Tab:ex}. 
We see that the test statistics based on different 9-MSTs vary a lot and the $p$-values could be very small under some choices of 9-MSTs but very large under some other choices, leading to completely different conclusions.


\begin{table}[!htbp]
  \centering
  \caption{The test statistics and their corresponding $p$-values (in parentheses, bold for those $<0.05$) of the generalized edge-count test ($S$) and the weighted edge-count test ($Z_w$) under four 9-MSTs on the phone-call network data. }\label{Tab:ex}
  \begin{tabular}{|c|cccc|}
    \hline
    MST & \#1 & \#2 & \#3 & \#4 \\ \hline
    $S$ & 6.86 (\textbf{0.032}) & 3.92 (0.141) & ~7.89 (\textbf{0.019}) & 3.90 (0.142) \\ \hline
     $Z_w$ & 2.61 (\textbf{0.004}) & 1.95 (\textbf{0.025}) & -1.13 (0.871) & 0.26 (0.396) \\ \hline
  \end{tabular}
\end{table}

In this work, we seek ways to effectively summarize the tests over these equally ``optimal" similarity graphs.  As we will show in Section \ref{sec:optimalG}, it is easy to have more than a million equally optimal similarity graphs when there are repeat observations, so manually examining the results from each of these graphs is usually not feasible.  This work borrows ideas from \cite{chen2013}, where the authors considered two ways in summarizing the \emph{original} edge-count test statistic over equally optimal similarity graphs.  However, extending the \emph{generalized} edge-count test directly as the original edge-count test is technically intractable due to the quadratic terms in the test statistics (details see in Section \ref{sec:extended}).  To get around this issue, we first summarize the basic quantities contributing to the graph-based tests over equally optimal similarity graphs, which we refer to as the \emph{extended basic quantities}.  We then construct extended generalized and weighted edge-count tests based on these extended basic quantities so that they can handle data with repeated observations.  In particular, we proved the following results: 
\begin{enumerate}[(1)]
	\item The extended weighted edge-count test statistic constructed in this way adopts the same weights as the weighted edge-count test to resolve the variance boosting problem of the edge-count test when the sample sizes of the two samples are different.
	\item The extended generalized edge-count test statistic constructed in this ways is composed of two asymptotically independent quantities. 
\end{enumerate}
Based on (2), we further study an extended max-type edge-count test that builds upon the two asymptotically independent quantities.  We also derive analytic $p$-value approximations for all the new test statistics, making them fast applicable to real datasets.  The tests are all implemented in an \texttt{R} package \texttt{gTests}.

The rest of the paper is organized as follows. Section \ref{sec:notation} provides notations used in the paper and preliminary setups.  Section \ref{sec:extended} discusses in details the extended weighted, generalized, and max-type edge-count tests.  The performance of these new tests are examined in Section \ref{sec:performance} and their asymptotic properties are studied in Section \ref{sec:asym}. Section \ref{sec:network} illustrate the new tests in the analysis of the phone-call network dataset.

\section{Notations and preliminary setups}
\label{sec:notation}

\subsection{Basic notations} 
\label{sec:basic}
For data with repeated observations, assume that there are $K$ distinct values and we index them by $1,2,\cdots,K$.  Throughout the paper, we use the notations summarized in Table \ref{tab:notation}. 
\begin{table}[!htbp]
  \centering
  \caption{Data with repeated observations summarized by distinct values.}\label{tab:notation}
  \begin{tabular}{|cccccc|}
    \hline
    Distinct value index & 1 & 2 & $\cdots$ & K & Total\\ \hline
    Sample 1 & $n_{11}$ & $n_{12}$ & $\cdots$ & $n_{1K}$ & $n_1$ \\ \hline
    Sample 2 & $n_{21}$ & $n_{22}$ & $\cdots$ & $n_{2K}$ & $n_2$ \\ \hline
    Total & $m_1$ & $m_2$ & $\cdots$ & $m_K$ & N \\
    \hline
  \end{tabular} \\
  \vspace{0.2cm}
  Here, $m_i=n_{1i}+n_{2i},~i=1,\cdots,K;~n_i=\sum_{k=1}^{K}n_{ik},~i=1,2;~N=n_1+n_2.$
\end{table}

Let $\{d(i,j):i,j=1,\cdots,K\}$ be the distance matrix on the distinct values, with $d(i,j)$ being the distance between values indexed by $i$ and $j$.  For an undirected graph $G$, let $|G|$ be the number of edges in $G$. For any node $i$ in the graph $G$, let $\e^G_i$ be the set of edges in $G$ that contain node $i$, and $\mathcal{V}_i^G$ be the set of nodes in $G$ that connect to node $i$. 

Since there is no distributional assumption on the data, we work under the permutation null distribution, which places $1/\binom{N}{n_1}$ probability on each of the $\binom{N}{n_1}$ ways of assigning the sample labels such that sample 1 has $n_1$ observations.  Without further specification, we use $\E$, $\var$, $\cov$, $\cor$ to denote the expectation, variance, covariance and correlation under the permutation null distribution.

\subsection{Similarity graphs on observations}
\label{sec:optimalG}
Let $C_0$ be a similarity graph constructed on the distinct values. It could be the MST, the MDP, or the NNG on the distinct values if it can be uniquely defined. 

If the common optimization rules do not result in an unique solution, we adopt the same treatment as in \cite{chen2013} by using the union of all MSTs.  Figure \ref{fig:example} is a simple example.  It can be shown that this union of all MSTs on the distinct values can be obtained through Algorithm \ref{NNL}.  For example, for the data in Figure \ref{fig:example}, distinct values \ba ~and \bb, \ba ~and \bc, \bb ~and \bc, \bd ~and \be ~are connected in the first step, then \bb ~and \bd, \bc ~and \be ~are connected in the second step. We call this graph the nearest neighbor link (NNL).  If one wants denser graphs, $k$-NNL could be considered, which is the union of the 1st,$\cdots,k$th NNLs, where the $j$th ($j>1$) NNL is a graph generated by Algorithm \ref{NNL} subject to the constraint that this graph does not contain any edge in the 1st,$\cdots,(j-1)$th NNLs. 

\begin{figure}
  \centering
  \includegraphics[width=1\textwidth]{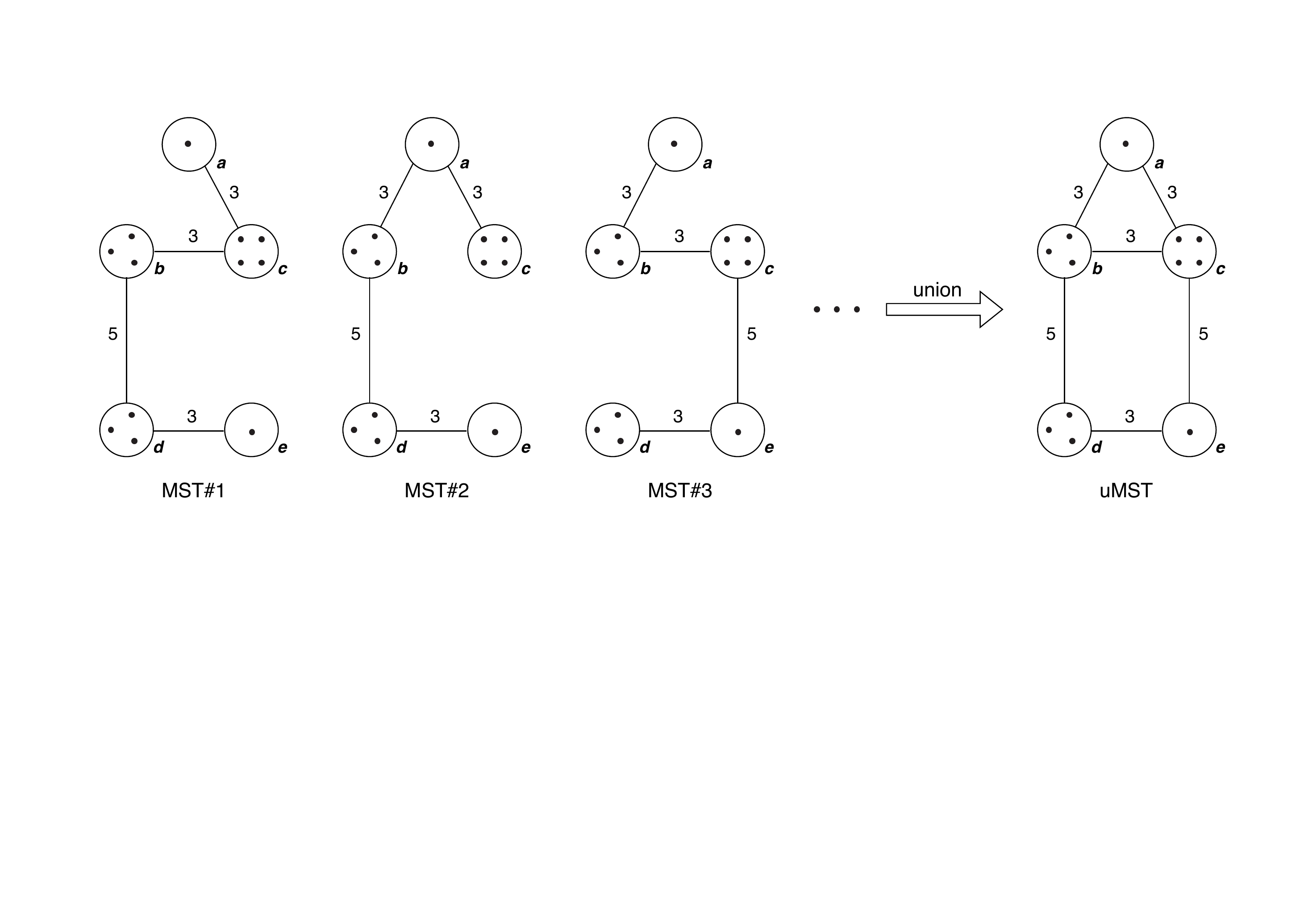}
  \vspace{-4.7cm}
  \caption{There are five distinct values (\ba, \bb, \bc, \bd, \be) denoted by the circles. For some distinct values, there are more than one observations, denoted by the more than one point in the circle. The distance between the distinct values are denoted on the edges.  It is clear that there are six MSTs on the distinct values (three of them are presented on the left) and the last plot is the union of the six MSTs on the distinct values. }\label{fig:example}
\end{figure}

\begin{algorithm}
  \caption{Generate a NNL}
  \label{NNL}
  \begin{algorithmic}
     \STATE For each distinct value indexed by $i~(i=1,\cdots,K)$, let $d_{\min}(i) = \min\{d(i,j): j\neq i\}$ and $N(i)=\{j: d(i,j)=d_{\min}(i)\}$.  Connect $i$ to each element in $N(i)$. 
    \WHILE{Not all distinct values are in one component} 
    \STATE Let $\mathcal{U}$ be one component, let $d_{\min}(\mathcal{U}) = \min\{d(i,j): i\in\mathcal{U}, j\notin\mathcal{U}\} $ and $N(\mathcal{U}) = \{(i,j):d(i,j) = d_{\min}(\mathcal{U}),i\in\mathcal{U}, j\notin\mathcal{U}\}$.  Connect each pair of distinct values indexed by $i$ and $j$ if $(i,j)\in N(\mathcal{U})$.
   \ENDWHILE
  \end{algorithmic}
\end{algorithm}

Then, a graph on observations initiated from $C_0$ can be defined in the following way: First, for each pair of edges  $(i,j)\in C_0$, randomly choose an observation with value indexed by $i$ and another observation with value indexed by $j$, connect these two observations; then, for each $i$, if there are more than one observation with value indexed by $i$, connect these observations by a spanning tree (any edge in such a spanning tree has distance 0).  
We denote all the these graphs by $\mathcal{G}_{C_0}$.

For the example in Figure \ref{fig:example}, since the MST on the distinct values is not uniquely defined, let $C_0$ be the NNL. There are only 5 distinct values and 6 edges on $C_0$. However, there are $15,552(=1^2\cdot 3^3\cdot 4^3 \cdot 3^2 \cdot 1^2)$ different ways in assigning the 6 edges in $C_0$ to corresponding observations in each circle. In addition, by Cayley's lemma, for the observations belonging to the same distinct value, there are 1, 3, 16, 3 and 1 spanning trees, respectively. Therefore, we have $2,239,488(= 15,552\times 3\times 16\times 3)$ graphs in $\mathcal{G}_{C_0}$.  Figure \ref{fig:plot-embedding} plots four of these graphs for illustration.  

\begin{figure}
  \centering
  \includegraphics[width=\textwidth]{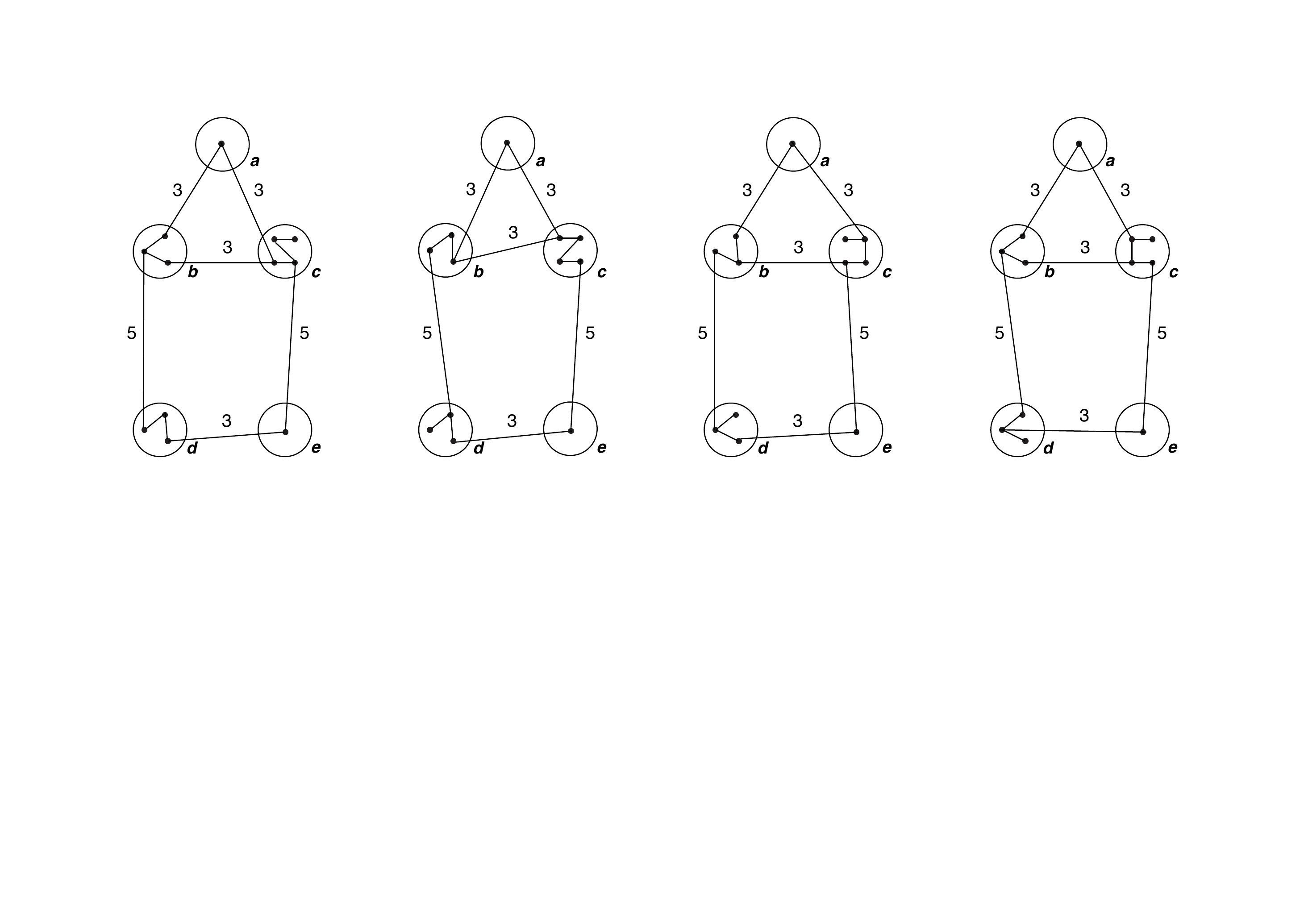}
  \vspace{-5cm}
  \caption{Four graphs, out of 2,239,488, on observations initiated from the NNL on distinct values.}\label{fig:plot-embedding}
\end{figure}


\subsection{Basic quantities in the graph-based tests}
\label{sec:basic}
For any graph $G$, let $R_{0,G}$ be the number of edges in $G$ that connect observations from different samples, $R_{1,G}$ be the number of edges in $G$ that connect observations from sample 1, and $R_{2,G}$ be that for sample 2. Here, $R_{0,G}$ is the test statistic for the original edge-count test.
In \cite{chen2017new}, the authors noticed that, the edge-count test ($R_{0,G}$) has low or even no power for scale alternatives when the dimension is moderate to high unless the sample size is extremely large due to the curse-of-dimensionality. To solve this problem, they considered the numbers of within-sample edges of the two samples separately and proposed the following generalized edge-count statistic
\begin{equation}\label{eq:S}
S_G = \left( \begin{array}{c}
	R_{1,G} - \E(R_{1,G}) \\ R_{2,G} - \E(R_{2,G}) 
\end{array} \right)^T \mathbf\Sigma_G^{-1}\left( \begin{array}{c}
	R_{1,G} - \E(R_{1,G}) \\ R_{2,G} - \E(R_{2,G}) 
\end{array} \right),
\end{equation}
where $\mathbf\Sigma_{G} = \var(\binom{R_{1,G}}{R_{2,G}})$.
 
Both the edge-count test and the generalized edge-count test are suggested to perform on a similarity graph that is denser than the MST, such as 5-MST 
, to boost their power \citep{friedman1979,chen2017new}. However, \cite{chen2018weighted} found that, for $k$-MST ($k>1$), the edge-count test ($R_{0,G}$) behaves weirdly when the two sample sizes are different. For example, consider the testing problem that the two underlying distributions are $\mathcal{N}_d(0,\I_d)$ vs $\mathcal{N}_d(\mu,\I_d)$ (e.g., $\|\mu\|_2=1.3,~d=50$), and two scenarios (\textrm{i}) $n_1=n_2=50$ and (\textrm{ii}) $n_1=50,~n_2=100$. The edge-count test has lower power in (\textrm{ii}) compared to that in (\textrm{i}) even though there are more observations in (\textrm{ii}). This is due to a variance boosting issue under unbalanced sample sizes (details seen in \cite{chen2018weighted}). To solve this issue, \cite{chen2018weighted} proposed a weighted edge-count test by inversely weighting the within-sample edges by the sample sizes\footnote{\cite{chen2018weighted} also studied $\frac{n_2}{n_1+n_2}R_{1,G}+\frac{n_1}{n_1+n_2}R_{2,G}$. These two statistics behave very similarly and only differ slightly when the sample sizes are small.}
 \begin{equation}\label{eq:Rw}
 R_{w,G} = \frac{n_2-1}{n_1+n_2-2}R_{1,G} + \frac{n_1-1}{n_1+n_2-2}R_{2,G} 
 \end{equation}
 with the reasoning that the sample with a larger number of observations is more likely to be connected within the sample if all other conditions are the same and thus shall be down-weighted. This weighted edge-count test statistic successfully addressed the variance boosting issue and works well for unequal sample sizes. Indeed, $\var(R_{w,G})\leq\var((1-p)R_{1,G}+pR_{2,G})$ for any $p\in[0,1]$.
 
\subsubsection{Extended basic quantities in the graph-based framework} \label{sec:extended_basic}
In \cite{chen2013}, the authors considered two ways to summarize the test statistics for $R_{0,G}$: averaging ($R_{0,(a)} = \frac{1}{|\mathcal{G}_{C_0}|}\sum_{G\in \mathcal{G}_{C_0}} R_{0,G}$ where $|\mathcal{G}_{C_0}|$ is the number of graphs in $\mathcal{G}_{C_0}$), and union ($R_{0,(u)} = R_{0, \bar{G}_{C_0}}$ where $\bar{G}_{C_0} = \cup\{G\in \mathcal{G}_{C_0}\}$, i.e., if observations $u$ and $v$ are connected in at least one of the graphs in $\mathcal{G}_{C_0}$, then these two observations are connected in $\bar{G}_{C_0}$\footnote{In the following, we somtimes use $\bar{G}$ instead of $\bar{G}_{C_0}$ when there is no confusion for simplicity.}).  Since it is easy to have a lot of graphs in $\mathcal{G}_{C_0}$, it is many times not feasible to compute these two quantities directly.  \cite{chen2013} derived analytic expressions for computing these two quantities in terms of the summary quantities in Table \ref{tab:notation} and $C_0$:
\begin{align*}
R_{0,(a)} & =\sum_{k=1}^{K}\frac{2n_{1k}n_{2k}}{m_k}+\sum_{(u,v)\in C_0}\frac{n_{1u}n_{2v}+n_{1v}n_{2u}}{m_um_v}, \\
R_{0,(u)} & =\sum_{k=1}^{K}n_{1k}n_{2k}+\sum_{(u,v)\in C_0}(n_{1u}n_{2v}+n_{1v}n_{2u}).	
\end{align*}

Similarly, we could defined $R_{1,(a)}$, $R_{1,(u)}$, $R_{2,(a)}$ and $R_{2,(u)}$, and their analytic expressions in terms of the summary quantities in Table \ref{tab:notation} and $C_0$ are given in Lemma \ref{lemma:basic} 

\begin{lemma}\label{lemma:basic}
The analytic expressions for $R_{1,(a)}$, $R_{1,(u)}$, $R_{2,(a)}$ and $R_{2,(u)}$ are:
\begin{align*}
  R_{1,(a)}  & \equiv \frac{1}{|\mathcal{G}_{C_0}|}\sum_{G\in \mathcal{G}_{C_0}} R_{1,G} & = \sum_{u=1}^{K}\frac{n_{1u}(n_{1u}-1)}{m_u}+\sum_{(u,v)\in C_0}\frac{n_{1u}n_{1v}}{m_um_v}, \\
  R_{1,(u)} & \equiv R_{1, \bar{G}_{C_0}} & = \sum_{u=1}^{K}\frac{n_{1u}(n_{1u}-1)}{2}+\sum_{(u,v)\in C_0}n_{1u}n_{1v}, \\
  R_{2,(a)} & \equiv \frac{1}{|\mathcal{G}_{C_0}|}\sum_{G\in \mathcal{G}_{C_0}} R_{2,G} & = \sum_{u=1}^{K}\frac{n_{2u}(n_{2u}-1)}{m_u}+\sum_{(u,v)\in C_0}\frac{n_{2u}n_{2v}}{m_um_v}, \\
  R_{2,(u)} & \equiv R_{2, \bar{G}_{C_0}} & = \sum_{u=1}^{K}\frac{n_{2u}(n_{2u}-1)}{2}+\sum_{(u,v)\in C_0}n_{2u}n_{2v}.
\end{align*}
\end{lemma}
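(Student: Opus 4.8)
The plan is to exploit additivity of the edge counts over the two kinds of edges from which every graph $G\in\mathcal{G}_{C_0}$ is built. By construction, the edges of $G$ split into (i) \emph{between-value} edges, exactly one for each edge $(u,v)\in C_0$, obtained by independently and uniformly picking one observation with value $u$ and one with value $v$; and (ii) \emph{within-value} edges, namely a spanning tree placed on the $m_u$ observations sharing each distinct value $u$. Since distinct choices of the picked endpoints and of the within-value spanning trees yield distinct graphs, averaging a quantity over $\mathcal{G}_{C_0}$ equals taking its expectation under the product measure in which each $C_0$-edge independently selects its two endpoints uniformly and each within-value spanning tree is drawn uniformly from the $m_u^{m_u-2}$ labeled trees (Cayley). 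Because $R_{1,G}$ counts edges joining two sample-1 observations, I would write $R_{1,G}$ as a sum of indicators over these two edge types and handle each type separately.

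For $R_{1,(a)}$ I would first treat the between-value edges. The edge created for $(u,v)\in C_0$ joins two sample-1 observations precisely when both independently chosen endpoints belong to sample 1, which happens with probability $\frac{n_{1u}}{m_u}\cdot\frac{n_{1v}}{m_v}$; summing over $C_0$ gives the term $\sum_{(u,v)\in C_0}\frac{n_{1u}n_{1v}}{m_um_v}$. For the within-value edges at value $u$, I would compute the probability that a fixed pair among the $m_u$ observations is an edge of a uniformly random labeled spanning tree. The hard (but short) part is this probability: by symmetry every pair is equally likely, and since each tree has $m_u-1$ edges among $\binom{m_u}{2}$ possible pairs, the probability equals $(m_u-1)/\binom{m_u}{2}=2/m_u$. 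Multiplying by the $\binom{n_{1u}}{2}$ sample-1 pairs and summing over $u$ yields $\sum_u\binom{n_{1u}}{2}\cdot\frac{2}{m_u}=\sum_u\frac{n_{1u}(n_{1u}-1)}{m_u}$, which together with the between-value term gives the claimed formula for $R_{1,(a)}$.

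For the union $R_{1,(u)}=R_{1,\bar{G}_{C_0}}$ I would argue that $\bar{G}_{C_0}$ is obtained by \emph{saturating} each edge type. Within any value $u$, every pair of observations lies in at least one spanning tree, so the within-value part of $\bar{G}_{C_0}$ is the complete graph on the $m_u$ observations, contributing $\binom{n_{1u}}{2}=\frac{n_{1u}(n_{1u}-1)}{2}$ sample-1 edges. Across any $C_0$-edge $(u,v)$, every pair consisting of an observation at $u$ and one at $v$ is realized by some endpoint choice, so the between-value part is the complete bipartite graph, contributing $n_{1u}n_{1v}$ sample-1 edges. Summing over $u$ and over $C_0$ gives the stated formula for $R_{1,(u)}$.

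Finally, the expressions for $R_{2,(a)}$ and $R_{2,(u)}$ follow verbatim by exchanging the roles of sample 1 and sample 2, i.e. replacing $n_{1u}$ with $n_{2u}$ throughout, since nothing in the argument distinguishes the two samples. I expect the only genuinely nontrivial ingredient to be the uniform-spanning-tree edge probability $2/m_u$; the remaining steps are bookkeeping via linearity of expectation together with the saturation observation for the union.
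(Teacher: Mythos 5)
Your proof is correct and follows essentially the same route the paper intends: the paper omits the argument, citing the analogous derivation of $R_{0,(a)}$ and $R_{0,(u)}$ in \cite{chen2013}, which rests on exactly your decomposition into within-value and between-value edges, the uniform-spanning-tree edge probability $2/m_u$, and linearity of expectation (respectively, saturation to complete and complete bipartite pieces for the union). The only point worth making explicit is that the within-value and between-value contributions never coincide as edges, so no double counting occurs; this is immediate since $C_0$ has no self-loops.
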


These analytic expressions can be obtained through similar arguments in \cite{chen2013} and the proof is omitted here.


\section{Extended graph-based tests}
\label{sec:extended}

Since the generalized edge-count test could cover a wider range of alternatives than the original edge-count test \citep{chen2017new}, we would like to have the generalized edge-count test statistic well defined when there are repeated observations.  
For the generalized edge-count test statistic: $$S_G=\left( \begin{array}{c}
	R_{1,G} - \E(R_{1,G}) \\ R_{2,G} - \E(R_{2,G}) 
\end{array} \right)^T \mathbf\Sigma_G^{-1}\left( \begin{array}{c}
	R_{1,G} - \E(R_{1,G}) \\ R_{2,G} - \E(R_{2,G}) 
\end{array} \right),$$ one straightforward way of defining the average statistic would be $$\frac{1}{|\mathcal{G}_{C_0}|}\sum_{G\in \mathcal{G}_{C_0}} S_{G}.$$  However, $\mathbf\Sigma_G$ varies for different $G$'s in $\mathcal{G}_{C_0}$, making the averaging over $S_G$'s difficult to move forward.  Even consider the simplified version that $\Sigma_G$ is fixed over $G$'s in $\mathcal{G}_{C_0}$, the quadratic terms in $S_G$ also make the averaging analytically intractable.  To view the problem more straightforwardly, notice that $S_G$ can be written as
\begin{align*}
S_G = \left(\frac{R_{w,G}-\E(R_{w,G})}{\sqrt{\var(R_{w,G})}}\right)^2 + \left(\frac{R_{d,G}-\E(R_{d,G})}{\sqrt{\var(R_{d,G})}}  \right)^2,	
\end{align*}
where $R_{w,G}=\frac{n_2-1}{N-2}R_{1,G}+\frac{n_1-1}{N-2}R_{2,G}$, $R_{d,G}=R_{1,G}-R_{2,G}$, and the two components are asymptotically independent under mild conditions \citep{chu2019asymptotic}.  Let $\E_{\mathcal{G}_{C_0}}$ and $\var_{\mathcal{G}_{C_0}}$ be the expectation and variance defined on the sample space $\mathcal{G}_{C_0}$ that places probability $1/|\mathcal{G}_{C_0}|$ on each $G\in \mathcal{G}_{C_0}$.  Using the first component as an example; taking the averaging over all $G\in \mathcal{G}_{C_0}$ is essentially $\E_{\mathcal{G}_{C_0}}\left( \left(\frac{R_{w,G}-\E(R_{w,G})}{\sqrt{\var(R_{w,G})}}\right)^2 \right) = \left(\E_{\mathcal{G}_{C_0}}\left(\frac{R_{w,G}-\E(R_{w,G})}{\sqrt{\var(R_{w,G})}}\right) \right)^2 + \var_{\mathcal{G}_{C_0}}\left(\frac{R_{w,G}-\E(R_{w,G})}{\sqrt{\var(R_{w,G})}}\right)$.  Here, $${\var(R_{w,G})}=\tfrac{n_1n_2(n_1-1)(n_2-1)}{N(N-1)(N-2)(N-3)}\left(|G|-\tfrac{\sum_{i=1}^N |\mathcal{E}_i^G|^2}{N-2} + \tfrac{2|G|^2}{(N-1)(N-2)}  \right)$$ includes $\sum_{i=1}^N |\mathcal{E}_i^G|^2$, which varies across different $G$'s in $\mathcal{G}_{C_0}$.  So it is already difficult to derive analytic tractable expression even only for $\E_{\mathcal{G}_{C_0}}\left(\frac{R_{w,G}-\E(R_{w,G})}{\sqrt{\var(R_{w,G})}}\right)$.  
To get around the issues, we extend the generalized and weighted edge-count test based on how they were introduced in \cite{chen2018weighted} and \cite{chen2017new}, respectively, based on the extended quantities we have already derived in Section \ref{sec:extended_basic}.  In the following, we first discuss the extended weighted edge-count test, and then the extended generalized edge-count test.  The key components in the extended generalized edge-count test further compose the extended max-type edge-count test. 

%
%

\subsection{Extended weighted edge-count tests}
\subsubsection{Motivation}
\label{moti-rw}

As mentioned in Section \ref{sec:basic}, for data without repeated observations, there is a variance boosting problem for the edge-count test under unbalanced sample sizes. To solve the issue, \cite{chen2018weighted} proposed a weighted edge-count test $R_{w,G}$ (see definition in (\ref{eq:Rw})).

When there are repeated observations, the above problem also exists for the extended edge-count test. To illustrate the problem, we use a preference ranking set up, where two groups of people are asked to rank six objects, and we test whether the two samples have the same preference over these six objects or not. Let $\Xi$ be the set of all permutations of the set $\{1,2,3,4,5,6\}.$ We use the following probability model introduced by \cite{mallows1957} to generate data:
\[
\pp_{\theta,\eta}(\zeta)=\frac{1}{\psi(\theta)}\exp\{-\theta d(\zeta,\eta)\},\quad \zeta,\eta\in\Xi,\quad\theta\in\mathbf{R},
\]
where $d(\cdot,\cdot)$ is a distance function such as Kendall's or Spearman's distance and $\psi$ is a normalizing constant. There are two parameters, $\theta$ and $\eta$, where $\eta$ can be viewed as the ``center'' of the distribution and $\theta$ controls the ``spread'' of the distribution --- the larger $\theta$ is, the less the distribution spreads. In the following, we let $d(\zeta,\eta)$ be the Spearman's distance between $\zeta$ and $\eta$ and let $C_0$ be the 3-NNL on distinct values.

Let $\theta_1=\theta_2=5,$ $\eta_1=\{1,2,3,4,5,6\}$ and $\eta_2=\{1,2,5,4,3,6\}$ in the example. We check the performance under unbalanced sample sizes. The power of $R_{0,(a)}$ and $R_{0,(u)}$ are 0.804 and 0.832 respectively when $n_1=n_2=80$. However, if we increase the sample size of Sample 2 to $n_2=400$ and keep all other parameters unchanged, the power of $R_{0,(a)}$ and $R_{0,(u)}$ decreases to 0.49 and 0.815, respectively (Table \ref{p1.2}).
 \begin{table}[!htbp]
  \centering
  \caption{The fraction of trials (out of 1000) that the test rejected the null hypothesis at 0.05 significance level in the preference ranking example. Here, $\eta_1=\{1,2,3,4,5,6\}$, $\eta_2=\{1,2,5,4,3,6\}$, $\theta_1=\theta_2=5$.}\label{p1.2}
  \begin{tabular}{|c|c|c|}
    \hline
    Power & $n_1=n_2=80$ & $n_1=80,~n_2=400$ \\ \hline
    $R_{0,(a)}$ & 0.804 & 0.49 \\ \hline
    $R_{0,(u)}$ & 0.832 & 0.815 \\
    \hline
  \end{tabular}
\end{table}

\subsubsection{Determining the weights}\label{sec:weights}
Following the similar idea, we could weight $R_{1,(a)}$ and $R_{2,(a)}$, and $R_{1,(u)}$ and $R_{2,(u)}$ to solve the problem. Under the union approach, the statistics $R_{1,(u)}$ and $R_{2,(u)}$ are simplified versions of $R_1$ and $R_2$ defined on $\bar{G}$, so the weights should be the same, i.e., 
\begin{equation}\label{Ruw}
R_{w,(u)} = (1-\hat p)R_{1,(u)} + \hat pR_{2,(u)} \text{ with } \hat p=\frac{n_1-1}{N-2}.
\end{equation}

However, for the average approach, the weights are not this straightforward. The following theorem shows that the weights for the average approach should also be the same. 

\begin{theorem}\label{th:Rwa}
For all test statistics of the form $aR_{1,(a)}+bR_{2,(a)}$, $a+b=1$, $a,b>0$, we have $\var(aR_{1,(a)}+bR_{2,(a)})\geq\var(R_{w,(a)})$, where $R_{w,(a)}=(1-\hat p)R_{1,(a)}+\hat pR_{2,(a)}$ with $\hat p=\frac{n_1-1}{N-2}$.
\end{theorem}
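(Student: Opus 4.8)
The plan is to turn the statement into a one-variable quadratic minimization and exploit an orthogonality characterization of the minimizer. Setting $b=1-a$ and $R_{d,(a)}:=R_{1,(a)}-R_{2,(a)}$, write $aR_{1,(a)}+(1-a)R_{2,(a)}=R_{2,(a)}+a\,R_{d,(a)}$, so that
\[
\var\big(aR_{1,(a)}+(1-a)R_{2,(a)}\big)=\var(R_{2,(a)})+2a\,\cov(R_{2,(a)},R_{d,(a)})+a^2\var(R_{d,(a)}).
\]
This is a convex quadratic in $a$, since its leading coefficient $\var(R_{d,(a)})\ge 0$ (if it vanishes the claim is trivial, with equality throughout). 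Its minimizer is characterized by the first-order/orthogonality condition $\cov\big(aR_{1,(a)}+(1-a)R_{2,(a)},\,R_{d,(a)}\big)=0$. Since $R_{w,(a)}=(1-\hat p)R_{1,(a)}+\hat p R_{2,(a)}=R_{1,(a)}-\hat p\,R_{d,(a)}$ with $\hat p=\frac{n_1-1}{N-2}$, it therefore suffices to prove the single identity
\[
\cov\big(R_{1,(a)},R_{d,(a)}\big)=\hat p\,\var\big(R_{d,(a)}\big).
\]

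To compute these permutation moments I would pass to a quadratic-form representation. Let $g_i\in\{0,1\}$ indicate that observation $i$ is assigned to sample~1, and let $w_{ij}=\mathsf{P}_{\mathcal{G}_{C_0}}[(i,j)\in G]$ be the probability that edge $(i,j)$ appears in a random graph of $\mathcal{G}_{C_0}$; explicitly $w_{ij}=2/m_u$ for two observations sharing distinct value $u$ and $w_{ij}=1/(m_um_v)$ for observations in $C_0$-adjacent values $u,v$, which one checks reproduces the formulas for $R_{1,(a)},R_{2,(a)}$ in Lemma~\ref{lemma:basic}. Then $R_{1,(a)}=\sum_{i<j}w_{ij}g_ig_j$ and $R_{2,(a)}=\sum_{i<j}w_{ij}(1-g_i)(1-g_j)$. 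The crucial simplification is that the difference collapses to a \emph{linear} form in $g$: writing $\bar d_i=\sum_{j\ne i}w_{ij}$ and $L=\sum_{i<j}w_{ij}$, one gets
\[
R_{d,(a)}=\sum_{i<j}w_{ij}(g_i+g_j-1)=\sum_i \bar d_i\,g_i-L.
\]
This reduces the target from a quadratic-versus-quadratic covariance to a tractable quadratic-versus-linear one.

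I would then evaluate both sides with the hypergeometric moments of $g$ under sampling $n_1$ ones out of $N$. The linear side is standard: $\var(R_{d,(a)})=\frac{n_1n_2}{N(N-1)}\,D_2$ with $D_2=\sum_i\bar d_i^2-\frac{4L^2}{N}$. For $\cov(R_{1,(a)},R_{d,(a)})=\sum_{i<j}\sum_k w_{ij}\bar d_k\,\cov(g_ig_j,g_k)$ I would split by whether $k$ coincides with an endpoint $i,j$ (using $g_i^2=g_i$) or is distinct, producing the two moment levels $\frac{n_1(n_1-1)}{N(N-1)}$ and $\frac{n_1(n_1-1)(n_1-2)}{N(N-1)(N-2)}$. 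Collecting the $\sum_i\bar d_i^2$ and $L^2$ terms and simplifying the resulting factorials, both contributions combine into $\frac{n_1n_2(n_1-1)}{N(N-1)(N-2)}\,D_2$, which is exactly $\hat p\,\var(R_{d,(a)})$; this establishes the identity and hence the theorem.

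The main obstacle is the bookkeeping in the moment computation: correctly enumerating the overlap cases in $\cov(g_ig_j,g_k)$ and checking that the $\sum_i\bar d_i^2$-coefficients and the $L^2$-coefficients from $\cov(R_{1,(a)},R_{d,(a)})$ assemble into the same $D_2$ combination that appears in $\var(R_{d,(a)})$. Everything downstream is routine algebra; what makes the argument go through cleanly is the collapse of $R_{d,(a)}$ to a linear statistic, so I would foreground that observation before grinding the covariance.
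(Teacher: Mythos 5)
Your proof is correct, and it shares the paper's top-level strategy---reduce to minimizing a one-variable convex quadratic in the mixing weight---but it verifies the minimizer in a genuinely different and more self-contained way. The paper simply states that the minimizer is $\hat p=\frac{\var(R_{1,(a)})-\cov(R_{1,(a)},R_{2,(a)})}{\var(R_{1,(a)})+\var(R_{2,(a)})-2\cov(R_{1,(a)},R_{2,(a)})}$ and then plugs in the three explicit second-moment formulas of Lemma \ref{lem1} (themselves proved in the supplement) to get $\hat p=\frac{n_1-1}{N-2}$. You instead recast the first-order condition as the orthogonality statement $\cov\bigl(R_{w,(a)},R_{d,(a)}\bigr)=0$, i.e.\ the single identity $\cov(R_{1,(a)},R_{d,(a)})=\hat p\,\var(R_{d,(a)})$, and prove it directly from the quadratic-form representation $R_{1,(a)}=\sum_{i<j}w_{ij}g_ig_j$ together with the key observation that $R_{d,(a)}$ collapses to a linear statistic in the sample-label indicators. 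I checked the computation: with $w_{ij}=2/m_u$ within a value and $1/(m_um_v)$ across $C_0$-edges, one gets $\sum_{i<j}w_{ij}(\bar d_i+\bar d_j)=\sum_i\bar d_i^2$ and $\sum_{i<j}w_{ij}\sum_{k\neq i,j}\bar d_k=2L^2-\sum_i\bar d_i^2$, and the two hypergeometric moment levels combine to $\frac{n_1(n_1-1)n_2}{N(N-1)(N-2)}\bigl(\sum_i\bar d_i^2-\tfrac{4L^2}{N}\bigr)=\hat p\,\var(R_{d,(a)})$, consistent with the $\var(R_{d,(a)})$ formula in Lemma \ref{lemma:SaSu}. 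What your route buys is that only one covariance identity needs to be computed rather than three full moment formulas, and the linear collapse of $R_{d,(a)}$ makes transparent why the optimal weight coincides with the no-repeats weight of \cite{chen2018weighted}; what the paper's route buys is brevity, since Lemma \ref{lem1} is needed elsewhere anyway. Your handling of the degenerate case $\var(R_{d,(a)})=0$ and the fact that the unconstrained minimizer lies in $(0,1)$ are both fine.
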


\begin{proof}
It is not hard to see that the minimum is achieved at
\begin{equation}\label{eq_q}
  \hat{p}=\frac{\var(R_{1,(a)})-\cov(R_{1,(a)},R_{2,(a)})}{\var(R_{1,(a)})+\var(R_{2,(a)})-2\cov(R_{1,(a)},R_{2,(a)})}.
\end{equation}

Plugging $\var(R_{1,(a)}),~\var(R_{2,(a)})$ and $\cov(R_{1,(a)},R_{2,(a)})$ provided in Lemma \ref{lem1} into (\ref{eq_q}), we have $\hat{p}=\frac{n_1-1}{N-2}$.
\end{proof}

In the following lemma, we provide exact analytic formulas to the expectation and variance of $R_{w,(u)}$ and $R_{w,(a)}$, respectively, so that both extended weighted edge-count tests can be standardized easily.
\begin{lemma}\label{lemma:Rw}
  The expectation and variance of $R_{w,(u)}$ and $R_{w,(a)}$ under the permutation null are:
\begin{flalign*}
\begin{split}
 \E(R_{w,(u)}) & = |\bar{G}|\tfrac{(n_1-1)(n_2-1)}{(N-1)(N-2)},\\
 \var(R_{w,(u)})& = \tfrac{n_1(n_1-1)n_2(n_2-1)}{N(N-1)(N-2)(N-3)}\Bigg\{|\bar{G}|-\tfrac{1}{N-2}\sum_{i=1}^N|\e_i^{\bar{G}}|^2+\tfrac{2}{(N-1)(N-2)}|\bar{G}|^2\Bigg\},\\
\E(R_{w,(a)}) &=  (N-K+|C_0|)\tfrac{(n_1-1)(n_2-1)}{(N-1)(N-2)},\\
\var(R_{w,(a)}) &= \tfrac{n_1(n_1-1)n_2(n_2-1)}{N(N-1)(N-2)(N-3)}\Bigg\{-\tfrac{4}{N-2}\left(\sum_u\tfrac{(|\e_u^{C_0}|-2)^2}{4m_u}-\tfrac{(|C_0|-K)^2}{N}\right)\\
 & \hspace{2mm}  +2(K-\sum_{u}\tfrac{1}{m_u})+\sum_{(u,v)\in C_0}\tfrac{1}{m_um_v}-\tfrac{2}{N(N-1)}(|C_0|+N-K)^2 \Bigg\},
\end{split}&
\end{flalign*}
where $|\e_i^{\bar{G}}|=m_u-1+\sum_{v\in \V_u^{C_0}}m_v$ if observation $i$ is of value indexed by $u$, and $|\bar{G}|=\sum_{u=1}^{K}m_u(m_u-1)/2+\sum_{(u,v)\in C_0}m_um_v$.  Here, $\V_u^{C_0}$ is the set of distinct values that connect to the distinct value indexed by $u$ in $C_0$.
\end{lemma}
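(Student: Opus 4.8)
The plan is to realize each of $R_{w,(u)}$ and $R_{w,(a)}$ as a (possibly weighted) within-sample edge-count statistic on a single fixed graph over the $N$ observations, and then to apply the permutation-moment machinery for such statistics. Write $g_i=1$ if observation $i$ is assigned to sample $1$ and $g_i=0$ otherwise; under the permutation null $(g_1,\dots,g_N)$ is uniform over binary vectors with exactly $n_1$ ones. Introduce a symmetric weight matrix $A=(A_{ij})$ on the observations, where $A_{ij}$ is the fraction of graphs in $\mathcal{G}_{C_0}$ containing the edge $\{i,j\}$. For two observations $i,j$ of the same value $u$ this is the probability that a uniformly random spanning tree of the complete graph on the $m_u$ copies contains $\{i,j\}$, which by Cayley's formula and symmetry equals $2/m_u$; for $i,j$ of distinct values $u,v$ with $(u,v)\in C_0$ it equals $\tfrac1{m_u}\cdot\tfrac1{m_v}$, and otherwise it is $0$. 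Then $R_{1,(a)}=\sum_{i<j}A_{ij}g_ig_j$ and $R_{2,(a)}=\sum_{i<j}A_{ij}(1-g_i)(1-g_j)$, which one checks reproduce the closed forms of Lemma \ref{lemma:basic}; for the union graph the same holds with $A_{ij}\in\{0,1\}$ equal to the adjacency of $\bar G$.

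For the union case this identification finishes most of the work: $R_{w,(u)}=R_{w,\bar G}$ is exactly the weighted edge-count statistic of \cite{chen2018weighted} evaluated on the fixed graph $\bar G$, so its mean and variance follow by substituting $G=\bar G$ into the known formulas (the variance formula is the one recalled in Section \ref{sec:extended}). It then only remains to evaluate $|\bar G|$ and $\sum_{i=1}^N|\e_i^{\bar G}|^2$ in terms of the summary counts: within each value $u$ the union is the complete graph on $m_u$ vertices and across each $(u,v)\in C_0$ it is complete bipartite, giving $|\bar G|=\sum_u m_u(m_u-1)/2+\sum_{(u,v)\in C_0}m_um_v$ and $|\e_i^{\bar G}|=m_u-1+\sum_{v\in\V_u^{C_0}}m_v$ for $i$ of value $u$, as stated.

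For the average case the expectation is immediate: $\E(R_{\ell,(a)})=\big(\sum_{i<j}A_{ij}\big)\tfrac{n_\ell(n_\ell-1)}{N(N-1)}$ with total weight $\sum_{i<j}A_{ij}=(N-K)+|C_0|$, and the choice $\hat p=(n_1-1)/(N-2)$ is exactly what makes $(1-\hat p)n_1(n_1-1)+\hat p\,n_2(n_2-1)$ collapse to $N(n_1-1)(n_2-1)/(N-2)$, yielding the stated $\E(R_{w,(a)})$. For the variance I would use $\var(R_{w,(a)})=(1-\hat p)^2\var(R_{1,(a)})+\hat p^2\var(R_{2,(a)})+2\hat p(1-\hat p)\cov(R_{1,(a)},R_{2,(a)})$, taking the three second moments from Lemma \ref{lem1}; equivalently I would compute them directly from $R_{\ell,(a)}=\sum_{i<j}A_{ij}g_ig_j$ using $\E(g_ig_j)=\tfrac{n_\ell(n_\ell-1)}{N(N-1)}$ together with the analogous triple- and quadruple-product permutation moments.

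The main obstacle is this last second-moment computation. Expanding $\E\big(R_{\ell,(a)}^2\big)=\sum_{e,e'}A_eA_{e'}\E(g_eg_{e'})$ (with $e$ ranging over vertex pairs and $g_e=g_ig_j$) requires splitting the pairs $(e,e')$ into those that are identical, share exactly one vertex, and are disjoint, weighted respectively by $\tfrac{n_\ell^{(2)}}{N^{(2)}},\tfrac{n_\ell^{(3)}}{N^{(3)}},\tfrac{n_\ell^{(4)}}{N^{(4)}}$ (falling factorials, $x^{(r)}=x(x-1)\cdots(x-r+1)$), with the covariance governed by the analogous mixed products $\E(g_ig_j(1-g_k)(1-g_l))$. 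This reduces everything to three weighted-graph invariants: $\sum_{i<j}A_{ij}=N-K+|C_0|$, the squared-weight sum $\sum_{i<j}A_{ij}^2=2\big(K-\sum_u\tfrac1{m_u}\big)+\sum_{(u,v)\in C_0}\tfrac1{m_um_v}$, and $\sum_i d_i^2$ with weighted degree $d_i=2+\tfrac{|\e_u^{C_0}|-2}{m_u}$, so that $\sum_i d_i^2=4N+8(|C_0|-K)+\sum_u\tfrac{(|\e_u^{C_0}|-2)^2}{m_u}$; the squared-weight invariant is precisely the term absent in the union case, where $A_{ij}^2=A_{ij}$. The remaining difficulty is purely algebraic: combining these invariants through the weights $(1-\hat p)^2,\hat p^2,2\hat p(1-\hat p)$ and tracking the cancellations that produce the compact form in the statement (in particular the $-\tfrac{4}{N-2}(\cdot)$ and $-\tfrac{2}{N(N-1)}(\cdot)^2$ terms). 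I expect this bookkeeping, rather than any conceptual step, to be where care is most needed.
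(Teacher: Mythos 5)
Your proposal is correct and follows essentially the same route as the paper: the paper's proof is precisely to write $R_{w,(\cdot)}=(1-\hat p)R_{1,(\cdot)}+\hat p R_{2,(\cdot)}$ with $\hat p=(n_1-1)/(N-2)$ and plug the means, variances and covariances of $R_{1,(\cdot)}$ and $R_{2,(\cdot)}$ from Lemmas \ref{lem1} and \ref{lem2} into the resulting linear/bilinear expansions, which is what you do (your weighted-adjacency representation and the three graph invariants you list are exactly how Lemma \ref{lem1} is obtained, and your union-case shortcut via $R_{w,(u)}=R_{w,\bar G}$ is equivalent to invoking Lemma \ref{lem2}). The only remaining work in both treatments is the algebraic simplification you correctly flag.
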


This lemma can be proved straightforwardly by plugging the analytic expressions of $\E(R_{1,(a)})$, $\E(R_{2,(a)})$, $\var(R_{1,(a)})$, $\var(R_{2,(a)})$, $\cov(R_{1,(a)},R_{2,(a)})$, $\E(R_{1,(u)})$, $\E(R_{2,(u)})$, $\var(R_{1,(u)})$, $\var(R_{2,(u)})$ and $\cov(R_{1,(u)},R_{2,(u)})$ provided in Lemmas \ref{lem1} and \ref{lem2} in Appendix \ref{sec:AppA}.


\subsection{Extended generalized edge-count tests}
\label{subsec:general}

As we discussed earlier, it is technically intractable to derive the analytic expression for the average of $S_G$'s for $G\in\mathcal{G}_{C_0}$.  Here, we define extended generalized edge-count test statistic based on how the statistic was introduced in \cite{chen2017new} through the extended basic quantities:
\begin{align}
S_{(a)} & = \left( \begin{array}{c}
	R_{1,(a)} - \E(R_{1,(a)}) \\ R_{2,(a)} - \E(R_{2,(a)}) 
\end{array} \right)^T \mathbf\Sigma_{(a)}^{-1}\left( \begin{array}{c}
	R_{1,(a)} - \E(R_{1,(a)}) \\ R_{2,(a)} - \E(R_{2,(a)})
\end{array} \right),  \label{eq:Sa} \\
S_{(u)} & = \left( \begin{array}{c}
	R_{1,(u)} - \E(R_{1,(u)}) \\ R_{2,(u)} - \E(R_{2,(u)}) 
\end{array} \right)^T \mathbf\Sigma_{(u)}^{-1}\left( \begin{array}{c}
	R_{1,(u)} - \E(R_{1,(u)}) \\ R_{2,(u)} - \E(R_{2,(u)})
\end{array} \right), \label{eq:Su}
\end{align}
where $\Sigma_{(a)}= \var(\binom{R_{1,(a)}}{R_{2,(a)}})$, $\Sigma_{(u)}= \var(\binom{R_{1,(u)}}{R_{2,(u)}})$.  With similar arguments in \cite{chen2017new}, $S_{(a)}$ and $S_{(u)}$ defined in this way could deal with location and scale alternatives.  More studies on the performance of the tests are in Section \ref{sec:performance}.
Similar to $S_G$, $S_{(a)}$ and $S_{(u)}$ defined above can also be decomposed to components that are asymptotically independent under mild conditions, respectively (details see Theorems \ref{theorem:gauss1} and \ref{theorem:gauss2}). 

\begin{lemma}\label{lemma:SaSu}
	The extended generalized edge-count test statistics can be expressed as
	\begin{align}
		S_{(a)} & = \left(\frac{R_{w,(a)}-\E(R_{w,(a)})}{\sqrt{\var(R_{w,(a)})}}\right)^2 + \left(\frac{R_{d,(a)}-\E(R_{d,(a)})}{\sqrt{\var(R_{d,(a)})}}  \right)^2, \label{eq:Sa2}\\
		S_{(u)} & = \left(\frac{R_{w,(u)}-\E(R_{w,(u)})}{\sqrt{\var(R_{w,(u)})}}\right)^2 + \left(\frac{R_{d,(u)}-\E(R_{d,(u)})}{\sqrt{\var(R_{d,(u)})}}  \right)^2,	\label{eq:Su2}
	\end{align}
where $R_{w,(a)}$, $\E(R_{w,(a)})$, $\var(R_{w,(a)})$, $R_{w,(u)}$, $\E(R_{w,(u)})$ and $\var(R_{w,(u)})$ are provided in Section \ref{sec:weights}, and $R_{d,(a)} = R_{1,(a)} - R_{2,(a)}$, $R_{d,(u)} = R_{1,(u)} - R_{2,(u)}$ with their expectations and variances provided below.  
\begin{flalign*}
\begin{split}
 \E(R_{d,(a)})& =(N-K+|C_0|)\tfrac{n_1-n_2}{N},\\
 \var(R_{d,(a)})& = \tfrac{4n_1n_2}{N(N-1)}\Bigg\{\sum_u\tfrac{(|\e_u^{C_0}|-2)^2}{4m_u}-\tfrac{(|C_0|-K)^2}{N}\Bigg\}, \\
 \E(R_{d,(u)})&=  |\bar{G}|\tfrac{n_1-n_2}{N},\\
 \var(R_{d,(u)})&= \tfrac{n_1n_2}{N(N-1)}\Bigg\{\sum_{i=1}^N|\e_i^{\bar{G}}|^2-\tfrac{4}{N}|\bar{G}|^2\Bigg\}.
\end{split}&
\end{flalign*}

\end{lemma}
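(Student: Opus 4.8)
The plan is to prove the two assertions in turn: first the exact quadratic-form decompositions (\ref{eq:Sa2}) and (\ref{eq:Su2}), then the closed-form moments of $R_{d,(a)}$ and $R_{d,(u)}$. For the decomposition I would treat $(R_{w,(a)},R_{d,(a)})$ as an invertible linear reparametrization of $(R_{1,(a)},R_{2,(a)})$. Writing $\mathbf A=\bigl(\begin{smallmatrix} 1-\hat p & \hat p\\ 1 & -1\end{smallmatrix}\bigr)$ with $\hat p=\frac{n_1-1}{N-2}$, we have $\binom{R_{w,(a)}}{R_{d,(a)}}=\mathbf A\binom{R_{1,(a)}}{R_{2,(a)}}$ and $\det\mathbf A=-1\neq 0$. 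Setting $\mathbf z=\binom{R_{1,(a)}-\E(R_{1,(a)})}{R_{2,(a)}-\E(R_{2,(a)})}$ and $\mathbf w=\mathbf A\mathbf z$, invariance of the quadratic form under an invertible change of variables gives $S_{(a)}=\mathbf z^T\mathbf\Sigma_{(a)}^{-1}\mathbf z=\mathbf w^T(\mathbf A\mathbf\Sigma_{(a)}\mathbf A^T)^{-1}\mathbf w=\mathbf w^T\var(\mathbf w)^{-1}\mathbf w$, using $\var(\mathbf w)=\mathbf A\mathbf\Sigma_{(a)}\mathbf A^T$. Since the diagonal entries of $\var(\mathbf w)$ are exactly $\var(R_{w,(a)})$ and $\var(R_{d,(a)})$ while its off-diagonal entry is $\cov(R_{w,(a)},R_{d,(a)})$, the decomposition (\ref{eq:Sa2}) follows as soon as we verify that this off-diagonal entry vanishes.

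The crux is therefore the orthogonality $\cov(R_{w,(a)},R_{d,(a)})=0$ at $\hat p=\frac{n_1-1}{N-2}$. Expanding the covariance of $(1-\hat p)R_{1,(a)}+\hat p R_{2,(a)}$ against $R_{1,(a)}-R_{2,(a)}$ and equating it to zero returns precisely the stationarity condition $\hat p=\frac{\var(R_{1,(a)})-\cov(R_{1,(a)},R_{2,(a)})}{\var(R_{1,(a)})+\var(R_{2,(a)})-2\cov(R_{1,(a)},R_{2,(a)})}$ in (\ref{eq_q}); the same identity underlies the proof of Theorem \ref{th:Rwa}, which shows that substituting the entries of $\mathbf\Sigma_{(a)}$ from Lemma \ref{lem1} collapses this expression to $\frac{n_1-1}{N-2}$. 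Thus the very weight that minimizes $\var(R_{w,(a)})$ is also the weight that orthogonalizes $R_{w,(a)}$ and $R_{d,(a)}$, and (\ref{eq:Sa2}) follows. The union version (\ref{eq:Su2}) is even more direct: $R_{1,(u)}$ and $R_{2,(u)}$ are the ordinary within-sample edge counts on the single fixed graph $\bar G$, so $S_{(u)}=S_{\bar G}$ and the claim is exactly the decomposition of $S_G$ established in \cite{chen2017new} specialized to $G=\bar G$, again because $\hat p=\frac{n_1-1}{N-2}$ is the universal orthogonalizing weight for edge counts on any fixed graph.

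For the moment formulas I would use linearity alone. From $R_{d,(a)}=R_{1,(a)}-R_{2,(a)}$ we get $\E(R_{d,(a)})=\E(R_{1,(a)})-\E(R_{2,(a)})$ and $\var(R_{d,(a)})=\var(R_{1,(a)})+\var(R_{2,(a)})-2\cov(R_{1,(a)},R_{2,(a)})$, and analogously for the union case via $R_{d,(u)}=R_{1,(u)}-R_{2,(u)}$. Substituting the first and second moments and the covariances of $R_{1,(\cdot)}$ and $R_{2,(\cdot)}$ from Lemmas \ref{lem1} and \ref{lem2} and collecting terms yields the stated closed forms.

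I expect the only real obstacle to be the algebra of this last step, in particular reducing $\var(R_{1,(a)})+\var(R_{2,(a)})-2\cov(R_{1,(a)},R_{2,(a)})$ to the compact $\frac{4n_1n_2}{N(N-1)}\{\sum_u\frac{(|\e_u^{C_0}|-2)^2}{4m_u}-\frac{(|C_0|-K)^2}{N}\}$. The numerous cross terms indexed by distinct values and by edges of $C_0$ must cancel, and recognizing the surviving combination as the single degree-weighted sum $\sum_u\frac{(|\e_u^{C_0}|-2)^2}{4m_u}$ is the delicate bookkeeping step; the sample-size factors must likewise collapse into the symmetric prefactor $n_1 n_2$, reflecting the symmetry of $R_{d,(\cdot)}$ in the two samples. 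The corresponding algebraic check for the decomposition — that (\ref{eq_q}) simplifies to $\frac{n_1-1}{N-2}$ — is already discharged by Theorem \ref{th:Rwa}.
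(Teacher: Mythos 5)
Your proposal is correct and follows essentially the same route as the paper's proof: diagonalize the quadratic form via the invertible map $(R_1,R_2)\mapsto(R_w,R_d)$, observe that the variance-minimizing weight $\hat p=\frac{n_1-1}{N-2}$ from Theorem \ref{th:Rwa} is exactly the weight making $\cov(R_{w,(\cdot)},R_{d,(\cdot)})=0$ (since $\frac{d}{dp}\var((1-p)R_1+pR_2)=-2\cov((1-p)R_1+pR_2,R_1-R_2)$), and then obtain the moments of $R_{d,(\cdot)}$ by linearity from Lemmas \ref{lem1} and \ref{lem2}. The only part left implicit is the routine algebraic simplification of $\var(R_{1,(\cdot)})+\var(R_{2,(\cdot)})-2\cov(R_{1,(\cdot)},R_{2,(\cdot)})$ to the stated compact forms, which you correctly identify as bookkeeping.
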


Lemma \ref{lemma:SaSu} is proved in supplementary materials.

\subsection{Extended max-type edge-count test statistics}
Let $Z_{w,(a)} = \frac{R_{w,(a)}-\E(R_{w,(a)})}{\sqrt{\var(R_{w,(a)})}}$, $Z_{d,(a)} = \frac{R_{d,(a)}-\E(R_{d,(a)})}{\sqrt{\var(R_{d,(a)})}}$, $Z_{w,(u)} = \frac{R_{w,(u)}-\E(R_{w,(u)})}{\sqrt{\var(R_{w,(u)})}}$, and $Z_{d,(u)}= \frac{R_{d,(u)}-\E(R_{d,(u)})}{\sqrt{\var(R_{d,(u)})}}$.
Under some mild conditions, $Z_{w,(a)}$ and $Z_{d,(a)}$ are asymptotically independent with their joint distribution bivariate normal, and same for $Z_{w,(u)}$ and $Z_{d,(u)}$ (details see Theorems \ref{theorem:gauss1} and \ref{theorem:gauss2}). Here, we define the extended max-type edge-count statistics: 
\begin{equation*}
  M_{(a)}(\kappa) = \max(\kappa Z_{w,(a)},|Z_{d,(a)}|),\text{ and } M_{(u)}(\kappa) = \max(\kappa Z_{w,(u)},|Z_{d,(u)}|).
\end{equation*}

As the following arguments hold the same for the averaging the union statistics, we omit subscripts $(a)$ and $(u)$ for simplicity.  From the definition of the extended max-type edge-count test statistic, we can see that it makes use of both $Z_w$ and $Z_d$, and would be similar to $S$ and effective to both location and scale alternatives.  Also, the introduction of $\kappa$ in the definition makes it more flexible than $S$. 


We next briefly discuss the choice of $\kappa$. 
 It is easy to see that the rejection region $\{M(\kappa)\geq\beta\}$ is equivalent to $\{Z_{w}\geq\frac{\beta}{\kappa}\text{ or }|Z_{d}|\geq\beta\}$.
Let $\PP(Z_w\geq\beta_w)=\alpha_1$ and $\PP(|Z_d|\geq\beta_d)=\alpha_2$, and define $\gamma=\frac{\alpha_1}{\alpha_2}$.  Based on the asymptotic distribution of $(Z_w, Z_d)^T$ derived in Section \ref{sec:asym}, the relationship between $\gamma$ and $\kappa$ with the overall type I error rate controlled at 0.05 is shown in Table \ref{T2}.

 \begin{table}[!ht]
  \centering
  \caption{Relationship between $\gamma$ and $\kappa$.}\label{T2}
  \begin{tabular}{|cccccccc|}
    \hline
    $\gamma$ & 8 & 4 & 2 & 1 & 1/2 & 1/4 & 1/8\\ \hline
    $\kappa$ & 1.63 & 1.47 & 1.31 & 1.14 & 1 & 0.88 & 0.79 \\ \hline
  \end{tabular}
  \end{table}

To check how the choice of $\kappa$ affects the performance of the test, we examine the test on 100-dimensional multivariate normal distributions $\mathcal{N}_d(\mu_1,\Sigma_1)$ and $\mathcal{N}_d(\mu_2,\Sigma_2)$ that are different in mean and/or variance. 
Three scenarios are considered and the detailed results are presented in supplementary materials. Based on the simulation results, if there is no prior knowledge about the type of difference between the two distributions, we recommend $\kappa=\{1.31,1.14,1\}$ for $M(\kappa)$.

\section{Performance of the extended test statistics}
\label{sec:performance}
In this section, we study the performance of various tests through the ranking problems, where two groups of people are asked to rank six objects, and we test whether the two samples have the same preference over these six objects or not.  We consider the following two data generating mechanisms.  

\begin{enumerate}[(i)]
\item Data are genearated from the probability model introduced in Section \ref{moti-rw}
\begin{equation}\label{rankmodel}
\pp_{\theta,\eta}(\zeta)=\frac{1}{\psi(\theta)}\exp\{-\theta d(\zeta,\eta)\},\quad \zeta,\eta\in\Xi,\quad\theta\in\mathbf{R},
\end{equation}
where $\Xi$ be the set of all permutations of the set \{1,2,3,4,5,6\} and $d(\cdot,\cdot)$ is a distance function such as Kendall's or Spearman's distance.  The two samples are generated from $\pp_{\theta_1,\eta_1}(\cdot)$ and $\pp_{\theta_2,\eta_2}(\cdot)$, respectively.
\item  Let $\CalD_1$ and $\CalD_2$ be two different subsets of all possible rankings.  The two sample are generated from the uniform distribution on $\CalD_1$ and $\CalD_2$, respectively. 
\end{enumerate}

When Kendall's or Spearman's distance is used for $d(\cdot,\cdot)$, there are in general ties in the distance matrix, which lead to non-unique MSTs. Hence, we apply 3-NNL to construct the graph on distinct values. The results for Kendall's and Spearman's distance are very similar,  so we present the results based on the Spearman's distance in the following.

We compare the following statistics: $R_{0,(a)},~R_{0,(u)},$ $S_{(a)}$, $S_{(u)}$, $R_{w,(a)}$, $R_{w,(u)}$, $M_{(a)}(\kappa)$ and $M_{(u)}(\kappa)$ ($\kappa=1.31, 1.14, 1$) in eight scenarios (Scenarios 1--5 under (i) and Scenarios 6--8 under (ii)) with balanced and unbalanced sample sizes. In each scenario, the specific parameters under each scenario are chosen such that the tests have moderate power to be comparable.
\vspace{-0.2cm}
 \begin{itemize}
  \item Scenario 1 (Only $\eta$ differs) :

  $\eta_1=\{1,2,3,4,5,6\}$, $\eta_2=\{1,2,5,4,3,6\}$, $\theta_1=\theta_2=5$ with balanced ($n_1=n_2=100$) and unbalance ($n_1=100, n_2=400$) sample sizes.
  \item Scenario 2 (Only $\theta$ differs with $\theta_1>\theta_2$) :

  $\eta_1=\eta_2=\{1,2,3,4,5,6\}$, $\theta_1=5.5,~\theta_2=4$ with balanced ($n_1=n_2=300$) and unbalance ($n_1=300, n_2=600$) sample sizes.
  \item Scenario 3 (Only $\theta$ differs with $\theta_1<\theta_2$) :

  $\eta_1=\eta_2=\{1,2,3,4,5,6\}$, $\theta_1=4,~\theta_2=5.5$ with balanced ($n_1=n_2=300$) and unbalance ($n_1=300, n_2=600$) sample sizes.
  \item Scenario 4 (Both $\eta$ and $\theta$ differ with $\theta_1>\theta_2$) :

  $\eta_1=\{1,2,3,4,5,6\}$, $\eta_2=\{1,2,5,4,3,6\}$, $\theta_1=5.5,~\theta_2=4$ with balanced ($n_1=n_2=100$) and unbalance ($n_1=100, n_2=300$) sample sizes.
  \item Scenario 5 (Both $\eta$ and $\theta$ differ with $\theta_1<\theta_2$) :

  $\eta_1=\{1,2,3,4,5,6\}$, $\eta_2=\{1,2,5,4,3,6\}$, $\theta_1=4,~\theta_2=5.5$ with balanced ($n_1=n_2=100$) and unbalance ($n_1=100, n_2=300$) sample sizes.
  \item Scenario 6 (Different supports):
  
  $\CalD_1= \{\zeta\in\Xi:\zeta \text{ does not begin with No.6}\}$, $\CalD_2=\{\zeta \in\Xi:\zeta \text{ does not}$ $\text{end with No.1}\}$ with balanced ($n_1=n_2=150$) and unbalance ($n_1=150, n_2=250$) sample sizes.
  \item Scenario 7 (Different supports):
  
  $\CalD_1= \{\zeta\in\Xi:\zeta \text{ ranks No.1 before No.5}\}$, $\CalD_2=\{\zeta \in\Xi:\zeta \text{ ranks No.1}$ $\text{before No.6}\}$ with balanced ($n_1=n_2=150$) and unbalance ($n_1=150, n_2=250$) sample sizes.
  \item Scenario 8 (Different supports):
  
  $\CalD_1= \{\zeta\in\Xi:\zeta \text{ does not begin with No.6 and does not end with No.1}\}$, $\CalD_2=\{\zeta \in\Xi:\zeta \text{ ranks No.1 or No.2 in top 3}\}$ with balanced ($n_1=n_2=150$) and unbalance ($n_1=150, n_2=250$) sample sizes.
 \end{itemize}

\vspace{0cm}
\begin{table}[!htp]
  \centering
  \caption{Scenario 1: $\eta_1=\{1,2,3,4,5,6\}$, $\eta_2=\{1,2,5,4,3,6\}$, $\theta_1=\theta_2=5$}\label{Rank1}
  \begin{tabular}{|c|c|c|c|c|c|c|}
    \hline
    \multicolumn{7}{|c|}{$n_1=n_2=100$} \\ \hline
    Statistic & $R_{0,(a)}$ & $S_{(a)}$ & $R_{w,(a)}$ & $M_{(a)}(1.31)$ & $M_{(a)}(1.14)$ & $M_{(a)}(1)$ \\ \hline
    Estimated Power & \textbf{0.857} & 0.750 & \textbf{0.857} & 0.831 & 0.813 & 0.780 \\ \hline
    Statistic & $R_{0,(u)}$ & $S_{(u)}$ & $R_{w,(u)}$ & $M_{(u)}(1.31)$ & $M_{(u)}(1.14)$ & $M_{(u)}(1)$ \\ \hline
    Estimated Power & \textbf{0.888} & 0.791 & \textbf{0.888} & \textbf{0.861} & 0.840 & 0.818 \\
    \hline
  \end{tabular}
  \begin{tabular}{|c|c|c|c|c|c|c|}
    \hline
    \multicolumn{7}{|c|}{$n_1=100, n_2=400$} \\ \hline
    Statistic & $R_{0,(a)}$ & $S_{(a)}$ & $R_{w,(a)}$ & $M_{(a)}(1.31)$ & $M_{(a)}(1.14)$ & $M_{(a)}(1)$ \\ \hline
    Estimated Power & 0.641 & 0.889 & \textbf{0.949} & \textbf{0.940} & \textbf{0.935} & 0.915 \\ \hline
    Statistic & $R_{0,(u)}$ & $S_{(u)}$ & $R_{w,(u)}$ & $M_{(u)}(1.31)$ & $M_{(u)}(1.14)$ & $M_{(u)}(1)$ \\ \hline
    Estimated Power & 0.871 & \textbf{0.951} & \textbf{0.977} & \textbf{0.969} & \textbf{0.961} & \textbf{0.959} \\
    \hline
  \end{tabular}
\end{table}

\begin{table}[!htp]
  \centering
  \caption{Scenario 2: $\eta_1=\eta_2=\{1,2,3,4,5,6\}$, $\theta_1=5.5,~\theta_2=4$.}\label{Rank2}
  \begin{tabular}{|c|c|c|c|c|c|c|}
    \hline
    \multicolumn{7}{|c|}{$n_1=n_2=300$} \\ \hline
    Statistic & $R_{0,(a)}$ & $S_{(a)}$ & $R_{w,(a)}$ & $M_{(a)}(1.31)$ & $M_{(a)}(1.14)$ & $M_{(a)}(1)$ \\ \hline
    Estimated Power & {0.265} & 0.172 & {0.265} & 0.239 & 0.223 & 0.194 \\ \hline
    Statistic & $R_{0,(u)}$ & $S_{(u)}$ & $R_{w,(u)}$ & $M_{(u)}(1.31)$ & $M_{(u)}(1.14)$ & $M_{(u)}(1)$ \\ \hline
    Estimated Power & 0.438 & \textbf{0.796} & 0.438 & 0.767 & \textbf{0.797} & \textbf{0.828} \\
    \hline
  \end{tabular}
  \begin{tabular}{|c|c|c|c|c|c|c|}
    \hline
    \multicolumn{7}{|c|}{$n_1=300, n_2=600$} \\ \hline
    Statistic & $R_{0,(a)}$ & $S_{(a)}$ & $R_{w,(a)}$ & $M_{(a)}(1.31)$ & $M_{(a)}(1.14)$ & $M_{(a)}(1)$ \\ \hline
    Estimated Power & {0.525} & 0.325 & 0.310 & 0.348 & 0.334 & 0.318 \\ \hline
    Statistic & $R_{0,(u)}$ & $S_{(u)}$ & $R_{w,(u)}$ & $M_{(u)}(1.31)$ & $M_{(u)}(1.14)$ & $M_{(u)}(1)$ \\ \hline
    Estimated Power & 0 & \textbf{0.899} & 0.566 & \textbf{0.887} & \textbf{0.912} & \textbf{0.929} \\
    \hline
  \end{tabular}
\end{table}

\begin{table}[!htp]
  \centering
  \caption{Scenario 3: $\eta_1=\eta_2=\{1,2,3,4,5,6\}$, $\theta_1=4,~\theta_2=5.5$. }\label{Rank3}
  \begin{tabular}{|c|c|c|c|c|c|c|}
    \hline
    \multicolumn{7}{|c|}{$n_1=n_2=300$} \\ \hline
    Statistic & $R_{0,(a)}$ & $S_{(a)}$ & $R_{w,(a)}$ & $M_{(a)}(1.31)$ & $M_{(a)}(1.14)$ & $M_{(a)}(1)$ \\ \hline
    Estimated Power & {0.279} & 0.181 & {0.279} & 0.250 & 0.231 & 0.208 \\ \hline
    Statistic & $R_{0,(u)}$ & $S_{(u)}$ & $R_{w,(u)}$ & $M_{(u)}(1.31)$ & $M_{(u)}(1.14)$ & $M_{(u)}(1)$ \\ \hline
    Estimated Power & 0.413 & 0.755 & 0.413 & 0.730 & \textbf{0.781} & \textbf{0.806} \\
    \hline
  \end{tabular}
  \begin{tabular}{|c|c|c|c|c|c|c|}
    \hline
    \multicolumn{7}{|c|}{$n_1=300, n_2=600$} \\ \hline
    Statistic & $R_{0,(a)}$ & $S_{(a)}$ & $R_{w,(a)}$ & $M_{(a)}(1.31)$ & $M_{(a)}(1.14)$ & $M_{(a)}(1)$ \\ \hline
    Estimated Power & 0.061 & 0.378 & 0.355 & {0.393} & {0.393} & 0.386 \\ \hline
    Statistic & $R_{0,(u)}$ & $S_{(u)}$ & $R_{w,(u)}$ & $M_{(u)}(1.31)$ & $M_{(u)}(1.14)$ & $M_{(u)}(1)$ \\ \hline
    Estimated Power & \textbf{0.954} & 0.899 & 0.545 & 0.874 & \textbf{0.909} & \textbf{0.922} \\
    \hline
  \end{tabular}
\end{table}

\begin{table}[!htp]
  \centering
  \caption{Scenario 4: $\eta_1=\{1,2,3,4,5,6\}$, $\eta_2=\{1,2,5,4,3,6\}$, $\theta_1=5.5,~\theta_2=4$. The largest number in each row is boldened.}\label{Rank4}
  \begin{tabular}{|c|c|c|c|c|c|c|}
    \hline
    \multicolumn{7}{|c|}{$n_1=n_2=100$} \\ \hline
    Statistic & $R_{0,(a)}$ & $S_{(a)}$ & $R_{w,(a)}$ & $M_{(a)}(1.31)$ & $M_{(a)}(1.14)$ & $M_{(a)}(1)$ \\ \hline
    Estimated Power & \textbf{0.848} & 0.754 & \textbf{0.848} & 0.821 & 0.805 & 0.778 \\ \hline
    Statistic & $R_{0,(u)}$ & $S_{(u)}$ & $R_{w,(u)}$ & $M_{(u)}(1.31)$ & $M_{(u)}(1.14)$ & $M_{(u)}(1)$ \\ \hline
    Estimated Power & \textbf{0.884} & \textbf{0.865} & \textbf{0.884} & \textbf{0.883} & \textbf{0.879} & \textbf{0.863} \\
    \hline
  \end{tabular}
  \begin{tabular}{|c|c|c|c|c|c|c|}
    \hline
    \multicolumn{7}{|c|}{$n_1=100, n_2=300$} \\ \hline
    Statistic & $R_{0,(a)}$ & $S_{(a)}$ & $R_{w,(a)}$ & $M_{(a)}(1.31)$ & $M_{(a)}(1.14)$ & $M_{(a)}(1)$ \\ \hline
    Estimated Power & 0.790 & 0.888 & \textbf{0.948} & \textbf{0.940} & \textbf{0.925} & 0.912 \\ \hline
    Statistic & $R_{0,(u)}$ & $S_{(u)}$ & $R_{w,(u)}$ & $M_{(u)}(1.31)$ & $M_{(u)}(1.14)$ & $M_{(u)}(1)$ \\ \hline
    Estimated Power & 0.493 & \textbf{0.952} & \textbf{0.970} & \textbf{0.965} & \textbf{0.965} & \textbf{0.954} \\
    \hline
  \end{tabular}
\end{table}

\begin{table}[!htp]
  \centering
  \caption{Scenario 5: $\eta_1=\{1,2,3,4,5,6\}$, $\eta_2=\{1,2,5,4,3,6\}$, $\theta_1=4,~\theta_2=5.5$. }\label{Rank5}
  \begin{tabular}{|c|c|c|c|c|c|c|}
    \hline
    \multicolumn{7}{|c|}{$n_1=n_2=100$} \\ \hline
    Statistic & $R_{0,(a)}$ & $S_{(a)}$ & $R_{w,(a)}$ & $M_{(a)}(1.31)$ & $M_{(a)}(1.14)$ & $M_{(a)}(1)$ \\ \hline
    Estimated Power & \textbf{0.888} & 0.778 & \textbf{0.888} & 0.854 & 0.834 & 0.805 \\ \hline
    Statistic & $R_{0,(u)}$ & $S_{(u)}$ & $R_{w,(u)}$ & $M_{(u)}(1.31)$ & $M_{(u)}(1.14)$ & $M_{(u)}(1)$ \\ \hline
    Estimated Power & \textbf{0.917} & \textbf{0.873} & \textbf{0.917} & \textbf{0.898} & \textbf{0.890} & 0.870 \\
    \hline
  \end{tabular}
  \begin{tabular}{|c|c|c|c|c|c|c|}
    \hline
    \multicolumn{7}{|c|}{$n_1=100, n_2=300$} \\ \hline
    Statistic & $R_{0,(a)}$ & $S_{(a)}$ & $R_{w,(a)}$ & $M_{(a)}(1.31)$ & $M_{(a)}(1.14)$ & $M_{(a)}(1)$ \\ \hline
    Estimated Power & 0.813 & 0.917 & \textbf{0.962} & \textbf{0.954} & \textbf{0.947} & 0.935 \\ \hline
    Statistic & $R_{0,(u)}$ & $S_{(u)}$ & $R_{w,(u)}$ & $M_{(u)}(1.31)$ & $M_{(u)}(1.14)$ & $M_{(u)}(1)$ \\ \hline
    Estimated Power & \textbf{0.996} & \textbf{0.993} & \textbf{0.985} & \textbf{0.986} & \textbf{0.986} & \textbf{0.989} \\
    \hline
  \end{tabular}
\end{table}

\begin{table}[!htp]
  \centering
  \caption{Scenario 6: $\CalD_1= \{\zeta\in\Xi:\zeta \text{ does not begin with No.6}\}$, $\CalD_2=\{\zeta \in\Xi:\zeta \text{ does not}$ $\text{end with No.1}\}$}\label{Rank01}
  \begin{tabular}{|c|c|c|c|c|c|c|}
    \hline
    \multicolumn{7}{|c|}{$n_1=n_2=150$} \\ \hline
    Statistic & $R_{0,(a)}$ & $S_{(a)}$ & $R_{w,(a)}$ & $M_{(a)}(1.31)$ & $M_{(a)}(1.14)$ & $M_{(a)}(1)$ \\ \hline
    Estimated Power & \textbf{0.745} & 0.557 & \textbf{0.745} & 0.695 & 0.646 & 0.594 \\ \hline
    Statistic & $R_{0,(u)}$ & $S_{(u)}$ & $R_{w,(u)}$ & $M_{(u)}(1.31)$ & $M_{(u)}(1.14)$ & $M_{(u)}(1)$ \\ \hline
    Estimated Power & 0.670 & 0.503 & 0.670 & 0.626 & 0.580 & 0.528 \\
    \hline
  \end{tabular}
  \begin{tabular}{|c|c|c|c|c|c|c|}
    \hline
    \multicolumn{7}{|c|}{$n_1=150, n_2=250$} \\ \hline
    Statistic & $R_{0,(a)}$ & $S_{(a)}$ & $R_{w,(a)}$ & $M_{(a)}(1.31)$ & $M_{(a)}(1.14)$ & $M_{(a)}(1)$ \\ \hline
    Estimated Power & 0.826 & 0.744 & \textbf{0.881} & 0.834 & 0.804 & 0.767 \\ \hline
    Statistic & $R_{0,(u)}$ & $S_{(u)}$ & $R_{w,(u)}$ & $M_{(u)}(1.31)$ & $M_{(u)}(1.14)$ & $M_{(u)}(1)$ \\ \hline
    Estimated Power &  0.782 & 0.637 & 0.783 & 0.746 & 0.714 & 0.668 \\
    \hline
  \end{tabular}
\end{table}

\begin{table}[!htp]
  \centering
  \caption{Scenario 7: $\CalD_1= \{\zeta\in\Xi:\zeta \text{ ranks No.1 before No.5}\}$, $\CalD_2=\{\zeta \in\Xi:\zeta \text{ ranks No.1}$ $\text{before No.6}\}$}\label{Rank02}
  \begin{tabular}{|c|c|c|c|c|c|c|}
    \hline
    \multicolumn{7}{|c|}{$n_1=n_2=150$} \\ \hline
    Statistic & $R_{0,(a)}$ & $S_{(a)}$ & $R_{w,(a)}$ & $M_{(a)}(1.31)$ & $M_{(a)}(1.14)$ & $M_{(a)}(1)$ \\ \hline
    Estimated Power & \textbf{0.620} & 0.447 & \textbf{0.620} & 0.573 & 0.528 & 0.468 \\ \hline
    Statistic & $R_{0,(u)}$ & $S_{(u)}$ & $R_{w,(u)}$ & $M_{(u)}(1.31)$ & $M_{(u)}(1.14)$ & $M_{(u)}(1)$ \\ \hline
    Estimated Power & 0.502 & 0.387 & 0.502 & 0.470 & 0.450 & 0.415 \\
    \hline
  \end{tabular}
  \begin{tabular}{|c|c|c|c|c|c|c|}
    \hline
    \multicolumn{7}{|c|}{$n_1=150, n_2=250$} \\ \hline
    Statistic & $R_{0,(a)}$ & $S_{(a)}$ & $R_{w,(a)}$ & $M_{(a)}(1.31)$ & $M_{(a)}(1.14)$ & $M_{(a)}(1)$ \\ \hline
    Estimated Power & \textbf{0.840} & 0.743 & \textbf{0.880} & \textbf{0.841} & 0.815 & 0.790 \\ \hline
    Statistic & $R_{0,(u)}$ & $S_{(u)}$ & $R_{w,(u)}$ & $M_{(u)}(1.31)$ & $M_{(u)}(1.14)$ & $M_{(u)}(1)$ \\ \hline
    Estimated Power & 0.834 & 0.661 & 0.698 & 0.692 & 0.683 & 0.647 \\
    \hline
  \end{tabular}
\end{table}

\begin{table}[!htp]
  \centering
  \caption{Scenario 8: $\CalD_1= \{\zeta\in\Xi:\zeta \text{ does not begin with No.6 and does not end with No.1}\}$, $\CalD_2=\{\zeta \in\Xi:\zeta \text{ ranks No.1 or No.2 in top 3}\}$}\label{Rank03}
  \begin{tabular}{|c|c|c|c|c|c|c|}
    \hline
    \multicolumn{7}{|c|}{$n_1=n_2=150$} \\ \hline
    Statistic & $R_{0,(a)}$ & $S_{(a)}$ & $R_{w,(a)}$ & $M_{(a)}(1.31)$ & $M_{(a)}(1.14)$ & $M_{(a)}(1)$ \\ \hline
    Estimated Power & \textbf{0.886} & 0.763 & \textbf{0.886} & \textbf{0.858} & 0.828 & 0.788 \\ \hline
    Statistic & $R_{0,(u)}$ & $S_{(u)}$ & $R_{w,(u)}$ & $M_{(u)}(1.31)$ & $M_{(u)}(1.14)$ & $M_{(u)}(1)$ \\ \hline
    Estimated Power & 0.814 & 0.681 & 0.814 & 0.774 & 0.745 & 0.708\\
    \hline
  \end{tabular}
  \begin{tabular}{|c|c|c|c|c|c|c|}
    \hline
    \multicolumn{7}{|c|}{$n_1=150, n_2=250$} \\ \hline
    Statistic & $R_{0,(a)}$ & $S_{(a)}$ & $R_{w,(a)}$ & $M_{(a)}(1.31)$ & $M_{(a)}(1.14)$ & $M_{(a)}(1)$ \\ \hline
    Estimated Power & \textbf{0.943} & \textbf{0.916} & \textbf{0.962} & \textbf{0.944} & \textbf{0.938} & \textbf{0.928} \\ \hline
    Statistic & $R_{0,(u)}$ & $S_{(u)}$ & $R_{w,(u)}$ & $M_{(u)}(1.31)$ & $M_{(u)}(1.14)$ & $M_{(u)}(1)$ \\ \hline
    Estimated Power & 0.888 & 0.821 & \textbf{0.917} & 0.895 & 0.885 & 0.852 \\
    \hline
  \end{tabular}
\end{table}

The results are presented in Tables \ref{Rank1}--\ref{Rank03}.  Each table lists the fraction of trials (out of 1000) that the test reject the null hypothesis at 0.05 significance level. Those above 95 percentage of the best power under each setting are in bold. 

Tables \ref{Rank1}--\ref{Rank5} provide results for the data generated by mechanism (i). We see that $S_{(u)}$ and $M_{(u)}$ work well for all scenarios, while the others show obvious strengthes and weaknesses for different settings. For example, under the unbalanced setting ($n_1=300, n_2=600$), $R_{0,(u)}$ has no power under Scenario 2, $R_{0,(a)}$ has very low power under Scenario 3, and both $R_{w,(a)}$ and $R_{w,(u)}$ do not perform well when only $\theta$ differs (Scenarios 2 and 3). Overall, $M_{(u)}(\kappa)$ perform best among all the tests. When $\theta$ differs, $S_{(a)}$ and $S_{(u)}$ provide similar results to $M_{(a)}(\kappa)$ and $M_{(u)}(\kappa)$, respectively, but they perform worse than $M_{(a)}(\kappa)$ and $M_{(u)}(\kappa)$, respectively, when only $\eta$ differs (Scenario 1). In general, the tests based on ``union'' are slightly better than their ``averaging'' counterparts (except for some cases for $R_0$). 

 Tables \ref{Rank01}--\ref{Rank03} provide results for data generated by mechanism (ii).  We 
  see that the tests perform similarly well with those based on ``averaging'' slightly better than their ``union'' counterparts.

\begin{remark}
	For either the ``averaging" statistics and the ``union" statistics, their relationships can be represented by the following schematic plots on the reject regions in terms of $Z_w$ and $Z_d$.
	\begin{figure}[!h]
    \centering
    \includegraphics[width=0.32\textwidth]{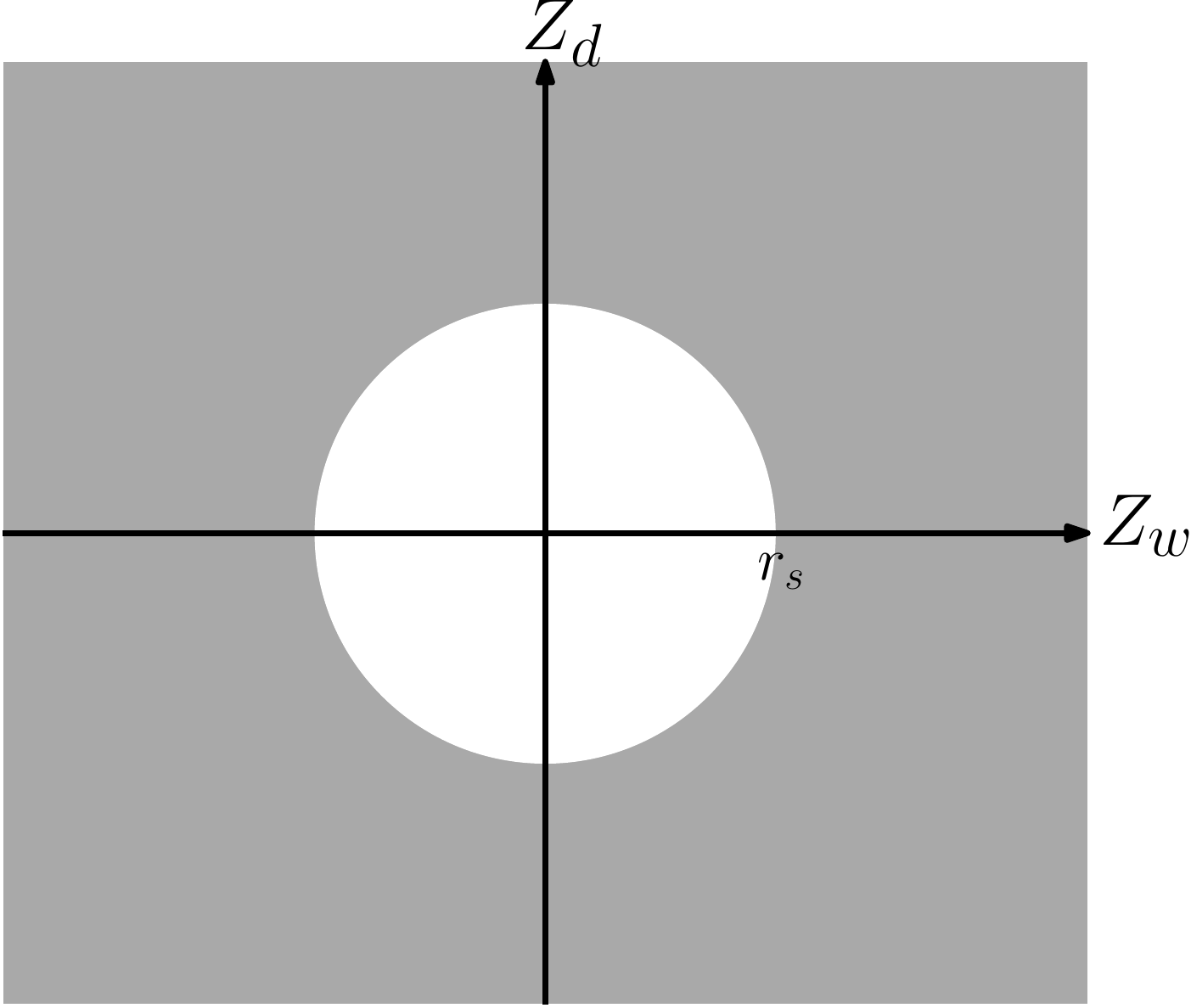}
    \includegraphics[width=0.32\textwidth]{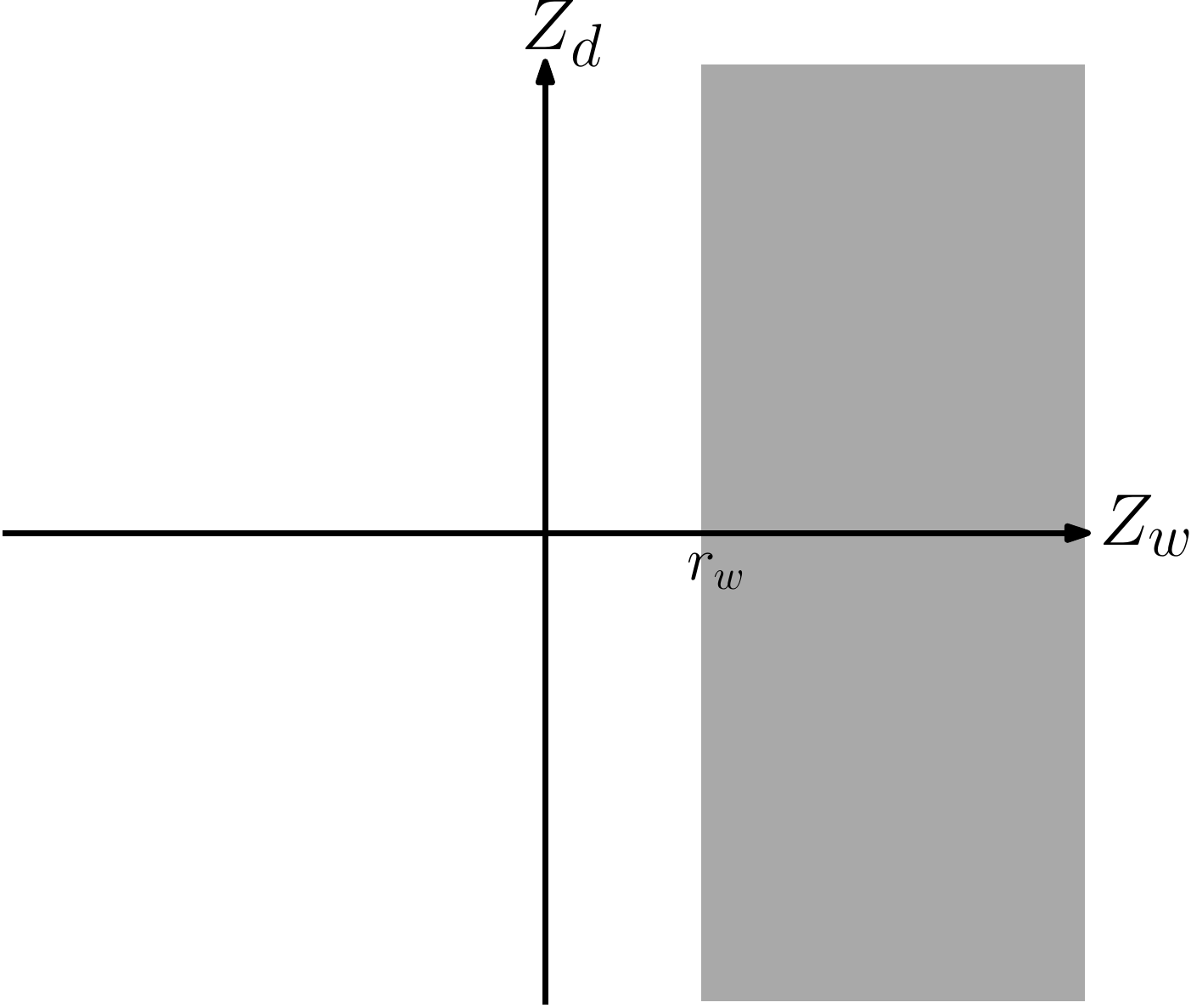}
    \includegraphics[width=0.32\textwidth]{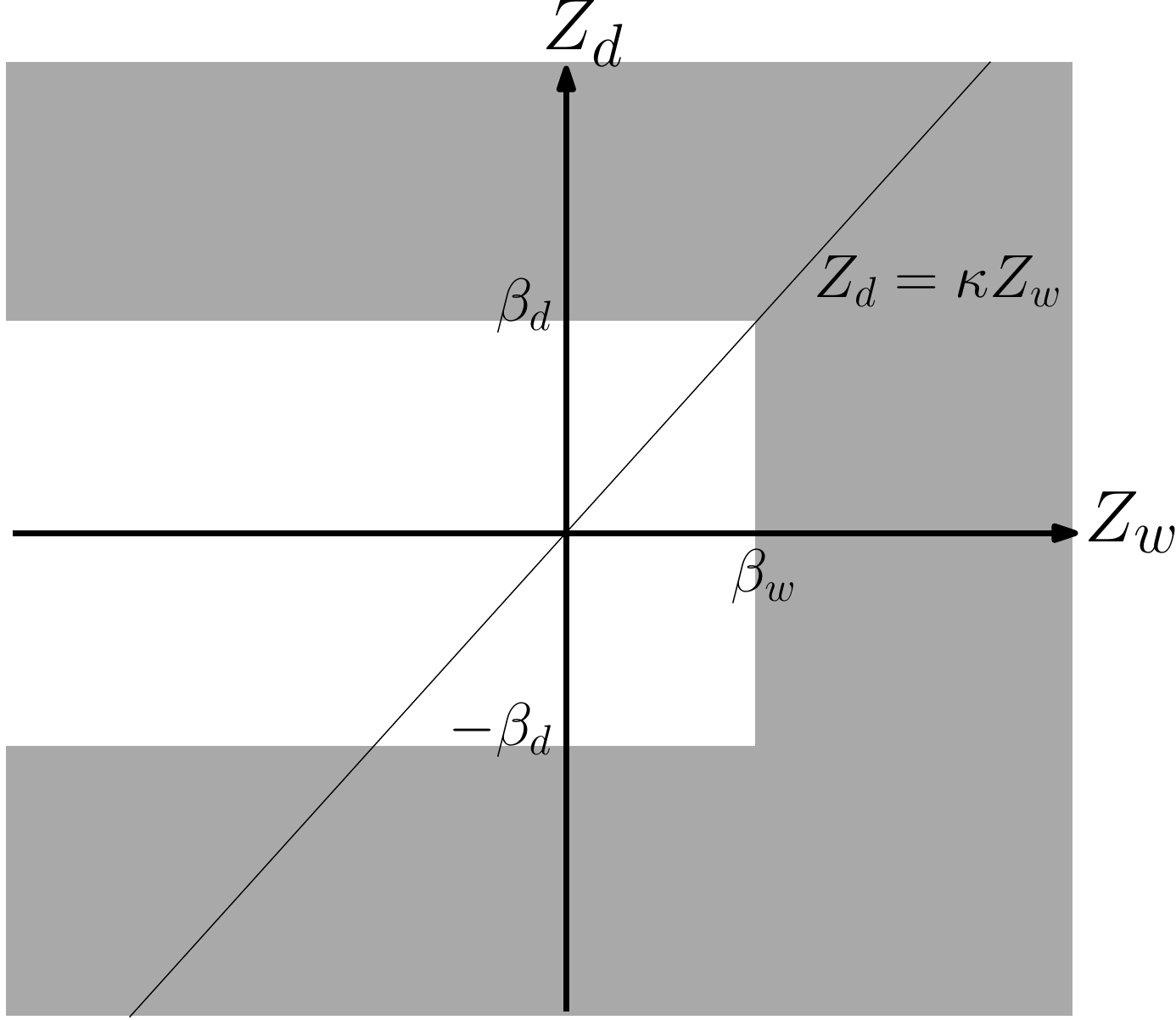}
    \caption{Rejection regions (in gray) of $S,~R_w,~M(\kappa)$. Left: $\{S\geq r_s\}$; middle: $\{Z_w\geq r_w\}$; right: $\{M(\kappa)\geq\beta\}$.}\label{Rej}
\end{figure}

In general, $Z_w$ aims for detecting location alternative and $Z_d$ aims for detecting scale alternative, so the extended generalized edge-count test and the extended max-type edge-count test are effective on both alternatives.  On the other hand, if we know in prior that the difference is only in mean, then the extended weighted edge-count tests are preferred.


\end{remark}

\section{Asymptotics}
\label{sec:asym}
In this section, we provide the asymptotic distributions of new test statistics described in Sections \ref{sec:extended}.  This provides us theoretical bases for obtaining analytic $p$-value approximation.  We then examine how well these approximations work for finite samples.  In the following, we use $a=O(b)$ to denote that $a$ and $b$ are of the same order and $a=o(b)$ to denote that $a$ is of a smaller order than $b$. Let $\e_{i,2}^G$ be the set of edges in $G$ that contain at least one node in $\V_i^G.$

\subsection{Statistics based on averaging} \label{sec:asym_a}
To derive the asymptotic behavior of the statistics based on averaging ($R_{w,(a)}, S_{(a)}, M_{(a)}$), we work under the following conditions:
\begin{condition}\label{C1}
  \begin{equation*}
    |C_0|, \sum_{(u,v)\in C_0}\frac{1}{m_um_v} = O(N), 
  \end{equation*}
  \begin{equation*}
    K, \sum_{u}\frac{1}{m_u}=O(N^{\alpha}), \quad\alpha\leq 1.
  \end{equation*}
\end{condition}
\begin{condition}\label{C2}
  \[
  \sum_{u}m_u(m_u+|\e_u^{C_0}|)(m_u+\sum_{v\in \mathcal{V}_u^{C_0}}m_v+|\e_{u,2}^{C_0}|) = o(N^{3/2}),
  \]
  \[
  \sum_{(u,v)\in C_0}(m_u+m_v+|\e_u^{C_0}|+|\e_v^{C_0}|)(m_u+m_v+\sum_{w\in (\mathcal{V}_u^{C_0}\cup\mathcal{V}_v^{C_0})}m_w+|\e_{u,2}^{C_0}|+|\e_{v,2}^{C_0}|) = o(N^{3/2}).
  \]
\end{condition}
\begin{condition}\label{C3}
\[
\sum_u\frac{(|\e_u^{C_0}|-2)^2}{4m_u}-\frac{(|C_0|-K)^2}{N} = O(N).
\]
\end{condition}
\begin{remark}\label{remark:ave}
One special case for Condition \ref{C1} is 
\begin{equation}\label{remark1-1}
    |C_0|, \sum_{(u,v)\in C_0}\frac{1}{m_um_v}, K, \sum_{u}\frac{1}{m_u}=O(N).
\end{equation}
Conditions \ref{remark1-1} and \ref{C2} are the same conditions stated in \cite{chen2013} in obtaining the asymptotic properties of $R_{0,(a)}$ and $R_{0,(u)}$.  Condition \ref{C1} is easy to be satisfied and Condition \ref{C2} sets constrains on the number of repeated observations and the degrees of nodes in the graph $C_0$ such that they cannot be too large.

When $m_u\equiv m$ for all $u$, Condition \ref{C2} can be simplified to
  \[
  \sum_u|\e_u^{C_0}||\e_{u,2}^{C_0}| = o(N^{3/2})
  \]
  and
  \[
  \sum_{(u,v)\in C_0}(|\e_u^{C_0}|+|\e_v^{C_0}|)(|\e_{u,2}^{C_0}|+|\e_{v,2}^{C_0}|) = o(N^{3/2}).
  \]

  The additional condition (Condition \ref{C3}) makes sure that $(R_1,R_2)^T$ does not degenerate asymptotically. When $m_u\equiv m$ for all $u$, Condition \ref{C3} becomes
\[
\frac{1}{4m}\sum_u|\e_u^{C_0}|^2-\frac{|C_0|^2}{mK} = \frac{1}{4m}\sum_u(|\e_u^{C_0}|-\frac{2|C_0|}{K})^2 = O(N),
\]
which is the variance of the degrees of nodes in $C_0$. When there is not enough variety in the degrees of nodes in $C_0$, the correlation between $R_1$ and $R_2$ tends to 1. (A similar condition is needed for the continuous counterpart \citep{chen2017new}.)
\end{remark}
\begin{theorem}\label{theorem:gauss1}
  Under Conditions \ref{C1}, \ref{C2} and \ref{C3}, as $N\rightarrow\infty$,
 \begin{equation*}
   \begin{pmatrix}
     Z_{w,(a)} \\
     Z_{d,(a)}
   \end{pmatrix}
   \stackrel{D}{\longrightarrow}\mathcal{N}_2(0,\I_2)
 \end{equation*}
 under the permutation null distribution.
\end{theorem}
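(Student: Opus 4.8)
The plan is to establish joint asymptotic normality of $(Z_{w,(a)}, Z_{d,(a)})^T$ by showing that both $R_{w,(a)}$ and $R_{d,(a)}$ are asymptotically normal marginally, that they are asymptotically uncorrelated, and that their joint distribution converges to a bivariate Gaussian via the Cram\'er--Wold device. Since $Z_{w,(a)}$ and $Z_{d,(a)}$ are already centered and scaled to have mean $0$ and variance $1$, it suffices to prove that every linear combination $a Z_{w,(a)} + b Z_{d,(a)}$ converges to a univariate normal, and separately that $\cov(Z_{w,(a)}, Z_{d,(a)}) \to 0$ so that the limiting covariance matrix is $\I_2$. Because $R_{w,(a)}$ and $R_{d,(a)}$ are both linear in $R_{1,(a)}$ and $R_{2,(a)}$, any such linear combination is itself a linear combination of $R_{1,(a)}$ and $R_{2,(a)}$, so the whole problem reduces to proving a central limit theorem for an arbitrary fixed linear combination $\alpha R_{1,(a)} + \beta R_{2,(a)}$ under the permutation null.

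First I would write $R_{1,(a)}$ and $R_{2,(a)}$ explicitly as sums over pairs of observations weighted by indicator functions of the sample assignment, using the analytic expressions from Lemma \ref{lemma:basic}. The key structural observation is that each $R_{i,(a)}$ is a quadratic form in the permutation (group-membership) indicators, so a fixed linear combination is again such a quadratic form over the edges of $\bar{G}_{C_0}$ with appropriate edge weights $1/(m_u m_v)$. I would then apply a Stein's-method / dependency-graph central limit theorem for functions of the permutation null distribution, of the type used in \cite{chen2017new} and \cite{chen2013}. The role of Conditions \ref{C1} and \ref{C2} is precisely to bound the higher moments and the dependency structure: Condition \ref{C1} keeps the number of edges, the number of distinct values, and the reciprocal-mass sums at the right order, while Condition \ref{C2} bounds sums over nodes and their two-step neighborhoods, which control the third-moment (or fourth-moment) remainder terms that must be $o(\text{variance}^{3/2})$ for the CLT remainder to vanish.

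Next I would verify the covariance claim. Using the exact variance formulas already available --- $\var(R_{w,(a)})$ from Lemma \ref{lemma:Rw} and $\var(R_{d,(a)})$ from Lemma \ref{lemma:SaSu}, together with $\cov(R_{1,(a)}, R_{2,(a)})$ from Lemma \ref{lem1} --- I would compute $\cov(R_{w,(a)}, R_{d,(a)})$ directly and show, by the algebraic identity that drove the choice of weight $\hat p = (n_1-1)/(N-2)$ in Theorem \ref{th:Rwa}, that this covariance is exactly zero (not merely asymptotically). Indeed, $R_{w,(a)}$ is the variance-minimizing convex combination, and the first-order optimality condition defining $\hat p$ is equivalent to orthogonality of $R_{w,(a)}$ with the difference direction $R_{d,(a)} = R_{1,(a)} - R_{2,(a)}$; so $\cov(Z_{w,(a)}, Z_{d,(a)}) = 0$ and the limiting covariance is the identity. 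Condition \ref{C3} enters here to guarantee $\var(R_{d,(a)})$ stays of the full order $O(N)$ rather than degenerating, so that $Z_{d,(a)}$ is genuinely nondegenerate and the $2\times 2$ limit is $\I_2$ rather than a rank-deficient matrix.

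The main obstacle I expect is the central limit step for the quadratic form under the permutation measure, specifically verifying that the remainder terms in the Stein/dependency-graph bound are negligible. This is where Condition \ref{C2} does the real work, and matching its two-step-neighborhood sums ($\sum_u m_u(m_u + |\e_u^{C_0}|)(m_u + \sum_{v\in\mathcal{V}_u^{C_0}} m_v + |\e_{u,2}^{C_0}|) = o(N^{3/2})$ and its edge analogue) to the concrete remainder terms arising from $\alpha R_{1,(a)} + \beta R_{2,(a)}$ is the delicate bookkeeping. The averaging weights $1/(m_u m_v)$ attached to edges of $C_0$ complicate the moment computations relative to the unweighted $R_{0,(a)}$ case in \cite{chen2013}, so I would carefully track how these weights interact with node degrees. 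Once the single-linear-combination CLT is in hand with remainder $o(1)$ uniformly in $(\alpha,\beta)$ on the unit sphere, Cram\'er--Wold combined with the exact-zero covariance delivers the stated bivariate limit $\mathcal{N}_2(0, \I_2)$.
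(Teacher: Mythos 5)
Your proposal is correct and follows essentially the same route as the paper's supplementary proof: reduce via Cram\'er--Wold to a fixed linear combination of $R_{1,(a)}$ and $R_{2,(a)}$ (a quadratic form in the permutation indicators), establish its normality by Stein's method with Conditions \ref{C1} and \ref{C2} controlling the moment/dependency remainders as in \cite{chen2013}, note that $\cov(R_{w,(a)},R_{d,(a)})=0$ exactly because the first-order optimality condition defining $\hat p$ in Theorem \ref{th:Rwa} is precisely orthogonality to $R_{d,(a)}$, and use Condition \ref{C3} for nondegeneracy of $Z_{d,(a)}$. No gaps in the plan beyond the bookkeeping you already flag.
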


The proof of this theorem is in supplementary materials. Based on Theorem \ref{theorem:gauss1}, it is easy to obtain the asymptotic distributions of $S_{(a)}$ and $M_{(a)}(\kappa)$.
\begin{corollary}\label{thSa}
   Under Conditions \ref{C1}, \ref{C2} and \ref{C3}, as $N\rightarrow\infty$, $S_{(a)}\stackrel{D}{\longrightarrow}\mathcal{X}_2^2$  under the permutation null distribution.
\end{corollary}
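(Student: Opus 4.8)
The plan is to reduce the corollary to the bivariate normal limit already established in Theorem \ref{theorem:gauss1} by a continuous mapping argument, so that essentially no new analytic work is required. The crucial observation is that Lemma \ref{lemma:SaSu} rewrites the extended generalized edge-count statistic as an \emph{exact} algebraic identity, $S_{(a)} = Z_{w,(a)}^2 + Z_{d,(a)}^2$, where $Z_{w,(a)}$ and $Z_{d,(a)}$ are the standardizations of $R_{w,(a)}$ and $R_{d,(a)}$ introduced at the start of Section \ref{sec:asym}. Thus $S_{(a)} = g(Z_{w,(a)}, Z_{d,(a)})$ for the continuous map $g(x,y) = x^2 + y^2$ on $\mathbf{R}^2$.

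First I would invoke Theorem \ref{theorem:gauss1}, which, under Conditions \ref{C1}, \ref{C2} and \ref{C3}, gives $(Z_{w,(a)}, Z_{d,(a)})^T \stackrel{D}{\longrightarrow} \mathcal{N}_2(0, \I_2)$ under the permutation null as $N \to \infty$. Since $g$ is continuous on all of $\mathbf{R}^2$, the continuous mapping theorem yields $S_{(a)} = g(Z_{w,(a)}, Z_{d,(a)}) \stackrel{D}{\longrightarrow} g(X,Y)$, where $(X,Y)^T \sim \mathcal{N}_2(0, \I_2)$. Finally, because $X$ and $Y$ are independent standard normal variables, their sum of squares $X^2 + Y^2$ follows a chi-squared distribution with two degrees of freedom, so $S_{(a)} \stackrel{D}{\longrightarrow} \mathcal{X}_2^2$, as claimed.

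There is no substantive obstacle in this corollary itself: all the analytic difficulty is absorbed into Theorem \ref{theorem:gauss1}, whose proof must verify Conditions \ref{C1}--\ref{C3} in order to control the moments of $R_{w,(a)}$ and $R_{d,(a)}$ and to establish both their joint asymptotic normality and their asymptotic independence (the latter being what forces the limiting covariance to be $\I_2$). Once that bivariate limit is available, the present statement follows immediately. The only point I would state carefully is that the decomposition in Lemma \ref{lemma:SaSu} is an exact identity rather than an asymptotic approximation, so that no remainder term enters when passing from $S_{(a)}$ to $Z_{w,(a)}^2 + Z_{d,(a)}^2$; the convergence of $S_{(a)}$ is then governed entirely by the convergence of the pair $(Z_{w,(a)}, Z_{d,(a)})^T$.
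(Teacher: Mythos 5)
Your proof is correct and follows the same route the paper intends: the authors derive Corollary \ref{thSa} directly from Theorem \ref{theorem:gauss1} via the exact decomposition $S_{(a)} = Z_{w,(a)}^2 + Z_{d,(a)}^2$ of Lemma \ref{lemma:SaSu} and the continuous mapping theorem. Your added remark that the decomposition is an exact identity (so no remainder term needs to be controlled) is exactly the right point to emphasize.
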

\begin{corollary}\label{th6}
  Under Conditions \ref{C1}, \ref{C2} and \ref{C3}, the asymptotic cumulative distribution function of $M_{(a)}(\kappa)$ is $\Phi(\frac{x}{\kappa})(2\Phi(x)-1)$ under the permutation null distribution, where $\Phi(x)$ denotes the cumulative distribution function of the standard normal distribution.
\end{corollary}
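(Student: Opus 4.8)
The plan is to deduce the result directly from the joint limit in Theorem~\ref{theorem:gauss1}, which gives $(Z_{w,(a)}, Z_{d,(a)})^T \stackrel{D}{\longrightarrow} \mathcal{N}_2(0, \I_2)$, i.e.\ the two coordinates converge jointly to \emph{independent} standard normals. Write $(W, D)^T$ for a random vector with this $\mathcal{N}_2(0,\I_2)$ limiting law. Since $M_{(a)}(\kappa) = \max(\kappa Z_{w,(a)}, |Z_{d,(a)}|)$, for any fixed $x$ the event $\{M_{(a)}(\kappa) \le x\}$ coincides with $\{\kappa Z_{w,(a)} \le x\} \cap \{|Z_{d,(a)}| \le x\} = \{(Z_{w,(a)}, Z_{d,(a)})^T \in A_x\}$, where $A_x = \{(z_1, z_2) : z_1 \le x/\kappa,\ |z_2| \le x\}$ is an axis-aligned rectangle in $\mathbb{R}^2$ (empty when $x<0$). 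This rewriting is the only structural step, and it reduces the problem to evaluating a rectangle probability under the limit.

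Next I would invoke the portmanteau theorem. Because the limiting law $\mathcal{N}_2(0,\I_2)$ is absolutely continuous, the topological boundary $\partial A_x$ has Lebesgue measure zero and hence limiting probability zero, so $A_x$ is a continuity set of the limit. Weak convergence of $(Z_{w,(a)}, Z_{d,(a)})^T$ to $(W,D)^T$ then yields
\[
\PP(M_{(a)}(\kappa) \le x) = \PP\big((Z_{w,(a)}, Z_{d,(a)})^T \in A_x\big) \longrightarrow \PP\big((W, D)^T \in A_x\big), \quad N \to \infty.
\]
Using independence of $W$ and $D$ in the limit, the rectangle probability factorizes as $\PP((W,D)^T \in A_x) = \PP(W \le x/\kappa)\,\PP(|D| \le x) = \Phi(x/\kappa)\,(2\Phi(x) - 1)$ for $x \ge 0$, giving the claimed asymptotic cumulative distribution function $\Phi(x/\kappa)(2\Phi(x) - 1)$.

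As for obstacles, there is essentially no analytic difficulty once Theorem~\ref{theorem:gauss1} is in hand; the only point requiring care is the measure-theoretic justification that $A_x$ is a continuity set, so that weak convergence transfers through this non-smooth functional (built from a max and an absolute value). I would also flag the edge case explicitly: since $|Z_{d,(a)}| \ge 0$ forces $M_{(a)}(\kappa) \ge 0$, the distribution is degenerate at $x<0$ (both sides are the relevant boundary values there), and the stated formula $\Phi(x/\kappa)(2\Phi(x)-1)$ is the meaningful expression only on $x \ge 0$. Finally, I would remark that the argument for $M_{(u)}(\kappa)$ is verbatim identical once Theorem~\ref{theorem:gauss2} supplies the analogous joint normal limit for $(Z_{w,(u)}, Z_{d,(u)})^T$.
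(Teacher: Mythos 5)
Your proposal is correct and follows exactly the route the paper intends: the corollary is presented as an immediate consequence of the joint bivariate normal limit in Theorem~\ref{theorem:gauss1}, and your rectangle/portmanteau argument with the factorization $\Phi(x/\kappa)(2\Phi(x)-1)$ is just a careful spelling-out of that one-line deduction. The extra remarks on the continuity set and the $x<0$ edge case are sound but not needed beyond what the paper already takes for granted.
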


\subsection{Statistics based on taking union} \label{sec:asym_u}
\begin{condition}\label{C4}
  \begin{equation*}
    |\bar{G}|=O(N).
  \end{equation*}
\end{condition}
\begin{condition}\label{C5}
\[
\sum_{i=1}^{N}|\e_i^{\bar{G}}|^2-\frac{4}{N}|\bar{G}|^2=O(N).
\]
\end{condition}
\begin{condition}\label{C6}
  \[
 \sum_{u=1}^{K}m_u^3(m_u+\sum_{v\in\V^{C_0}_u}m_v)\sum_{v\in\{u\}\cup\V^{C_0}_u}m_v(m_v+\sum_{w\in\V^{C_0}_v}m_w)=o(N^{3/2}),
  \]
  \begin{align*}
   \sum_{(u,v)\in C_0} m_um_v\Bigg[m_u(m_u & +\sum_{w\in\V^{C_0}_u}m_w)+ m_v(m_v+\sum_{w\in\V^{C_0}_v}m_w)\Bigg] \\
    & \cdot\Bigg[\sum_{\substack{w\in\{u\}\cup\{v\}\cup\V^{C_0}_u\cup\V^{C_0}_v\\y\in\V^{C_0}_w}}m_w(m_w+m_y)\Bigg] =o(N^{3/2}).
  \end{align*}
\end{condition}
\begin{remark}
Condition \ref{C4} is easy to satisfy. Condition \ref{C5} was mentioned in \cite{chen2017new} in the continuous version. When $m_u\equiv m$ for all $u$, Condition \ref{C5} could be rewritten as
  \[
  \sum_{u=1}^{K}|\e_u^{C_0}|^2-\frac{4}{K}|C_0|^2=O(K).
  \]
  If $C_0$ is the $k$-MST, $k=O(1)$, constructed under Euclidean distance, the above condition always holds based on resultsd in \cite{chen2017new}.
  
  When $m_u\equiv m$ for all $u$, Condition \ref{C6} becomes
  \[
  \sum_u|\e_u^{C_0}||\e_{u,2}^{C_0}| = o(N^{3/2})
  \]
  and
  \[
  \sum_{(u,v)\in C_0}(|\e_u^{C_0}|+|\e_v^{C_0}|)(|\e_{u,2}^{C_0}|+|\e_{v,2}^{C_0}|) = o(N^{3/2}),
  \]
  which are same as the simplified form in Remark \ref{remark:ave}. These conditions restrict the degrees of nodes in graph $C_0$.
\end{remark}
\begin{theorem}\label{theorem:gauss2}
  Under Conditions \ref{C4}, \ref{C5} and \ref{C6}, as $N\rightarrow\infty$,
  \begin{equation*}
   \begin{pmatrix}
     Z_{w,(u)} \\
     Z_{d,(u)}
   \end{pmatrix}
   \stackrel{D}{\longrightarrow}\mathcal{N}_2(0,\I_2),
 \end{equation*}
 under the permutation null distribution.
\end{theorem}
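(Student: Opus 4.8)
The plan is to exploit a feature special to the union construction that is \emph{not} available in the averaging case: the statistics $R_{1,(u)}$ and $R_{2,(u)}$ are, by their very definition, the within-sample edge counts $R_{1,\bar G}$ and $R_{2,\bar G}$ of a single \emph{fixed} graph $\bar{G}=\bar{G}_{C_0}$. Consequently $\Sigma_{(u)}$ equals $\var((R_{1,\bar G},R_{2,\bar G})^T)$ and the pair $(Z_{w,(u)},Z_{d,(u)})^T$ is precisely the standardized $(R_w,R_d)$ pair of the generalized edge-count test of \cite{chen2017new} built on $\bar{G}$. Thus Theorem \ref{theorem:gauss2} reduces to the permutation-null central limit theorem for that test applied to the concrete graph $\bar{G}$, and the only genuinely new work is verifying that the regularity conditions it needs are implied by Conditions \ref{C4}--\ref{C6}. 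This is why the union case is cleaner than the averaging case treated in Theorem \ref{theorem:gauss1}, where $R_{1,(a)},R_{2,(a)}$ are averages over $\mathcal{G}_{C_0}$ rather than counts on one graph.

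Concretely, I would first write $R_{1,(u)}=\sum_{(i,j)\in\bar G} I_i^{(1)}I_j^{(1)}$ and $R_{2,(u)}=\sum_{(i,j)\in\bar G} I_i^{(2)}I_j^{(2)}$, where $I_i^{(g)}$ indicates that observation $i$ is assigned to sample $g$ under the random permutation, so that both are quadratic forms in the permutation labels and $R_{w,(u)},R_{d,(u)}$ are fixed linear combinations of them. By the Cramér--Wold device it suffices to show every linear combination $\alpha Z_{w,(u)}+\beta Z_{d,(u)}$ is asymptotically $\mathcal{N}(0,\alpha^2+\beta^2)$; each such combination is a standardized linear combination of $R_{1,(u)}$ and $R_{2,(u)}$, hence a single permutation quadratic-form statistic on $\bar{G}$. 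For these I would invoke Stein's method exactly as in \cite{chen2017new}: decompose the sum over the edges of $\bar{G}$, form the dependency neighbourhood from edges sharing an endpoint, and bound the normal-approximation error. The error bound is governed by degree sums $\sum_{i=1}^N|\e_i^{\bar G}|^2$ and second-neighbourhood sums $\sum_i|\e_i^{\bar G}|\,|\e_{i,2}^{\bar G}|$; substituting the explicit degree formula $|\e_i^{\bar G}|=m_u-1+\sum_{v\in\V_u^{C_0}}m_v$ (and the corresponding expression for $|\e_{i,2}^{\bar G}|$) turns this bound into the left-hand sides of Condition \ref{C6}, so it vanishes as $N\to\infty$.

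To pin the limiting covariance to the identity I would use the exact $\E$, $\var$, and $\cov$ of $R_{w,(u)}$ and $R_{d,(u)}$ from Lemmas \ref{lemma:Rw} and \ref{lemma:SaSu}, together with the covariance formula for $(R_{1,(u)},R_{2,(u)})$ from Appendix Lemma \ref{lem2}. Condition \ref{C4} ($|\bar G|=O(N)$) fixes the correct orders of the variances; Condition \ref{C5} guarantees that $\var(R_{d,(u)})$ does not degenerate, so that $\Sigma_{(u)}$ stays invertible and the standardization is legitimate; and the choice $\hat p=(n_1-1)/(N-2)$ is exactly the one that drives $\cor(R_{w,(u)},R_{d,(u)})\to 0$. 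Together these give $\var((Z_{w,(u)},Z_{d,(u)})^T)\to\I_2$, and combined with the Stein bound they yield the claimed bivariate normal limit.

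The main obstacle is the degree bookkeeping that converts the abstract Stein error bound on $\bar{G}$ into Condition \ref{C6}. Since $\bar{G}$ is the union of within-value cliques (each value $u$ contributing $\binom{m_u}{2}$ distance-zero edges) and complete bipartite blocks across each edge $(u,v)\in C_0$ (contributing $m_um_v$ edges), evaluating $|\e_{i,2}^{\bar G}|$ for a node of value $u$ requires summing edge counts over $\{u\}\cup\V_u^{C_0}$ and over the second-order neighbourhood $\V_v^{C_0}$, which is exactly the combinatorial content encoded in the two sums of Condition \ref{C6}. The other delicate point is establishing non-degeneracy, i.e.\ that the standardized correlation stays bounded away from $1$ under Condition \ref{C5}; without it the diagonal limit could fail, just as the analogous subtlety arises for the continuous counterpart in \cite{chen2017new}.
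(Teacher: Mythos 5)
Your proposal is correct and follows essentially the same route as the paper: since $R_{1,(u)}$ and $R_{2,(u)}$ are by definition the within-sample edge counts on the single fixed graph $\bar{G}$, the theorem is just the permutation-null CLT for the generalized edge-count pair on $\bar{G}$, and the substantive work is exactly what you identify --- translating the degree and non-degeneracy conditions on $\bar{G}$ into Conditions \ref{C4}--\ref{C6} via $|\e_i^{\bar G}|=m_u-1+\sum_{v\in\V_u^{C_0}}m_v$, using the exact moment formulas (with $\cov(R_{w,(u)},R_{d,(u)})=0$ holding exactly, not merely asymptotically). The one point to make explicit is that the local-dependence Stein bound cannot be applied directly under the permutation measure, where all label indicators are globally dependent; as in \cite{chen2013} and \cite{chen2017new}, one first establishes the CLT under i.i.d.\ Bernoulli labelling and then transfers to the permutation null by conditioning on the sample sizes, a step your appeal to ``exactly as in \cite{chen2017new}'' implicitly covers.
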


The proof of this theorem is in supplementary materials. Based on Theorem \ref{theorem:gauss2}, it is easy to obtain the asymptotic distributions of $S_{(u)}$ and $M_{(u)}(\kappa)$.
\begin{corollary}\label{thSu}
   Under Conditions \ref{C4}, \ref{C5} and \ref{C6}, as $N\rightarrow\infty$, $S_{(u)}\stackrel{D}{\longrightarrow}\mathcal{X}_2^2$ under the permutation null distribution.
\end{corollary}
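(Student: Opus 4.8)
The plan is to recognize that this corollary is an immediate consequence of the joint asymptotic normality already established in Theorem \ref{theorem:gauss2}, combined with the quadratic-form decomposition of $S_{(u)}$ provided in Lemma \ref{lemma:SaSu}. First I would invoke Lemma \ref{lemma:SaSu}, equation \eqref{eq:Su2}, to write the test statistic exactly as a sum of two squared standardized quantities,
\begin{equation*}
S_{(u)} = \left(\frac{R_{w,(u)}-\E(R_{w,(u)})}{\sqrt{\var(R_{w,(u)})}}\right)^2 + \left(\frac{R_{d,(u)}-\E(R_{d,(u)})}{\sqrt{\var(R_{d,(u)})}}\right)^2 = Z_{w,(u)}^2 + Z_{d,(u)}^2,
\end{equation*}
so that $S_{(u)} = \big\|(Z_{w,(u)}, Z_{d,(u)})^T\big\|_2^2$ is precisely the squared Euclidean norm of the vector treated in Theorem \ref{theorem:gauss2}.

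Next I would apply Theorem \ref{theorem:gauss2}, which guarantees that under Conditions \ref{C4}, \ref{C5} and \ref{C6}, the vector $(Z_{w,(u)}, Z_{d,(u)})^T$ converges in distribution to $\mathcal{N}_2(0,\I_2)$ as $N\rightarrow\infty$ under the permutation null. The map $g:\mathbb{R}^2\to\mathbb{R}$ given by $g(x,y)=x^2+y^2$ is continuous, so the continuous mapping theorem yields $S_{(u)} = g(Z_{w,(u)}, Z_{d,(u)}) \stackrel{D}{\longrightarrow} g(W)$, where $W\sim\mathcal{N}_2(0,\I_2)$. Since the two coordinates of $W$ are independent standard normals, $g(W)$ is the sum of squares of two independent $\mathcal{N}(0,1)$ variables, which is by definition $\mathcal{X}_2^2$. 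This gives the claimed limit $S_{(u)}\stackrel{D}{\longrightarrow}\mathcal{X}_2^2$.

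The only genuinely nontrivial content lies in Theorem \ref{theorem:gauss2} itself, whose proof is deferred to the supplementary materials; once that joint Gaussian limit is in hand, the corollary requires no further analytic work beyond the algebraic identity from Lemma \ref{lemma:SaSu} and a one-line continuous-mapping argument. Consequently I do not anticipate any serious obstacle here: the main point to state carefully is that the \emph{joint} convergence to $\mathcal{N}_2(0,\I_2)$ (rather than merely marginal convergence of each $Z$) is what legitimizes identifying the limit as $\mathcal{X}_2^2$ with the correct two degrees of freedom, since independence of the limiting coordinates is exactly what makes the sum of their squares chi-squared.
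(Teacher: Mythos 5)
Your proposal is correct and matches the paper's intended argument exactly: the paper presents Corollary \ref{thSu} as an immediate consequence of Theorem \ref{theorem:gauss2} via the decomposition $S_{(u)} = Z_{w,(u)}^2 + Z_{d,(u)}^2$ from Lemma \ref{lemma:SaSu}, which is precisely the route you take with the continuous mapping theorem. Your explicit remark that the \emph{joint} bivariate normal limit (not merely marginal convergence) is what justifies the $\mathcal{X}_2^2$ identification is a worthwhile clarification of a step the paper leaves implicit.
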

\begin{corollary}\label{th6t}
  Under Conditions \ref{C4}, \ref{C5} and \ref{C6}, the asymptotic cumulative distribution function of $M_{(u)}(\kappa)$ is $\Phi(\frac{x}{\kappa})(2\Phi(x)-1)$ under the permutation null distribution, where $\Phi(x)$ denotes the cumulative distribution function of the standard normal distribution.
\end{corollary}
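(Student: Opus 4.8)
The plan is to read the corollary off the joint central limit theorem in Theorem~\ref{theorem:gauss2}, which asserts $(Z_{w,(u)},Z_{d,(u)})^T\stackrel{D}{\longrightarrow}\mathcal{N}_2(0,\I_2)$; thus in the limit $Z_{w,(u)}$ and $Z_{d,(u)}$ behave like \emph{independent} standard normals, say $W$ and $D$. Because $M_{(u)}(\kappa)=\max(\kappa Z_{w,(u)},|Z_{d,(u)}|)$ is a continuous functional of this pair, the whole argument reduces to expressing the event $\{M_{(u)}(\kappa)\leq x\}$ as a rectangle in the plane and passing to the weak limit, exactly as in the averaging case (Corollary~\ref{th6}).

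First I would fix $x>0$. Since $|Z_{d,(u)}|\geq 0$, the maximum is at most $x$ precisely when both arguments are, so
\begin{equation*}
\{M_{(u)}(\kappa)\leq x\}=\Big\{Z_{w,(u)}\leq \tfrac{x}{\kappa}\Big\}\cap\{-x\leq Z_{d,(u)}\leq x\},
\end{equation*}
which is the preimage of the rectangle $\{(w,d):w\leq x/\kappa,\,-x\leq d\leq x\}$ under the map $(Z_{w,(u)},Z_{d,(u)})$. The boundary of this rectangle has Lebesgue measure zero, hence probability zero under the absolutely continuous limit $\mathcal{N}_2(0,\I_2)$, so it is a continuity set of the limiting law. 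Applying the Portmanteau theorem and then the independence of $W$ and $D$ in the limit,
\begin{equation*}
\PP\big(M_{(u)}(\kappa)\leq x\big)\longrightarrow \PP\Big(W\leq\tfrac{x}{\kappa}\Big)\,\PP(-x\leq D\leq x)=\Phi\Big(\tfrac{x}{\kappa}\Big)\big(2\Phi(x)-1\big),
\end{equation*}
where the last equality uses $\Phi(x)-\Phi(-x)=2\Phi(x)-1$.

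For $x\leq 0$ the event $\{|Z_{d,(u)}|\leq x\}$ is null, so both sides vanish, consistent with $2\Phi(0)-1=0$; this confirms the formula for all $x$. Since all the probabilistic content is already supplied by Theorem~\ref{theorem:gauss2}, I do not expect any substantive obstacle: the only point requiring care is verifying that the defining rectangle is a continuity set of the limit, which is immediate from the absolute continuity of the bivariate normal. In effect this statement is the union-based analogue of Corollary~\ref{th6}, with Theorem~\ref{theorem:gauss2} taking the place of Theorem~\ref{theorem:gauss1}.
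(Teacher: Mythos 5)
Your proposal is correct and matches the paper's intent: the paper treats this corollary as an immediate consequence of Theorem~\ref{theorem:gauss2}, and your argument --- writing $\{M_{(u)}(\kappa)\leq x\}$ as the rectangle $\{Z_{w,(u)}\leq x/\kappa\}\cap\{|Z_{d,(u)}|\leq x\}$ and passing to the bivariate standard normal limit --- is exactly the routine step being omitted. Nothing is missing.
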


\subsection{Analytic $p$-value approximations}
The asymptotic results in Sections \ref{sec:asym_a} and \ref{sec:asym_u} provide theoretical bases for analytic $p$-values approximations. Here we check how well the analytic $p$-values approximations based on asymptotic results work under finite samples by comparing them with  permutation $p$-values calculated from 10,000 random permutations. 

In the following, we generate data from mechanism (i) in Section \ref{sec:performance} with $\theta_1=\theta_2=5$, $\eta_1=\{1,2,3,4,5\}$ and $\eta_2=\{1,4,3,2,5\}$. We set $C_0$ be the NNL and examine the difference of the asymptotic $p$-value and permutation $p$-value under various settings.


\begin{figure}[h]
  \centering
  \includegraphics[width=0.84\textwidth]{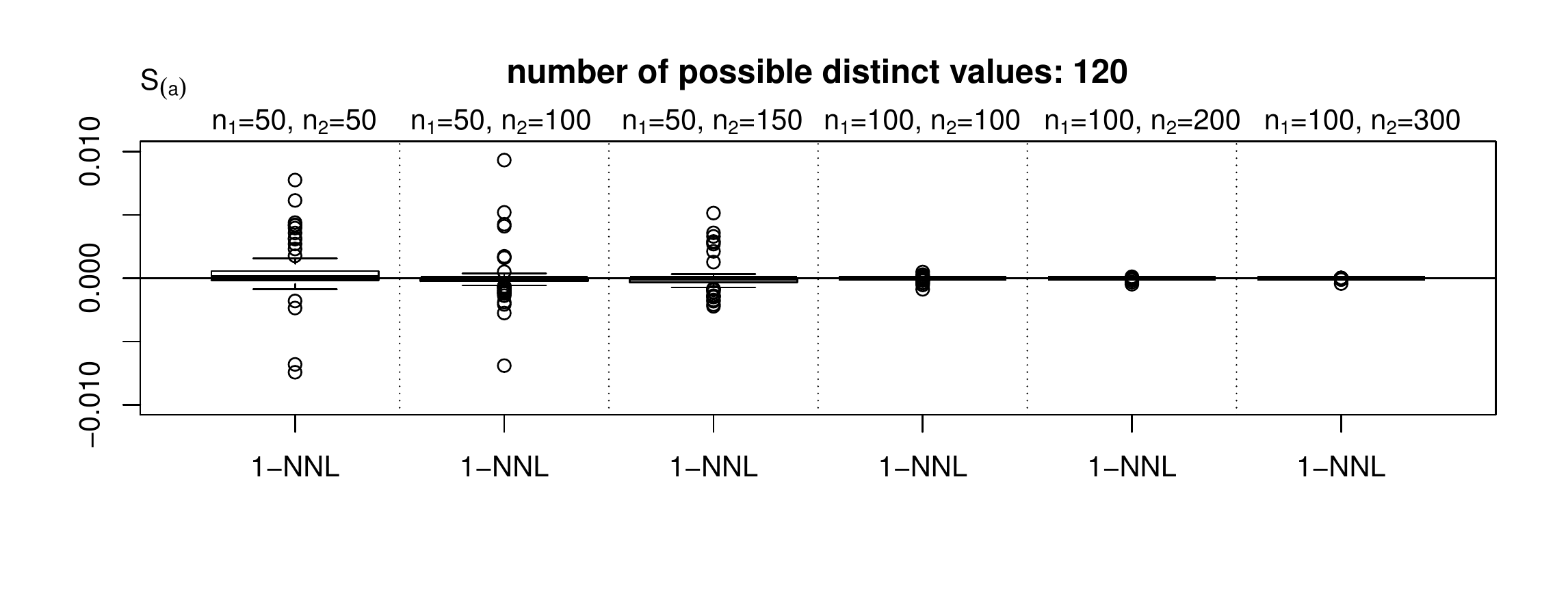}
 
 \vspace{-2.2em}
  
  \includegraphics[width=0.84\textwidth]{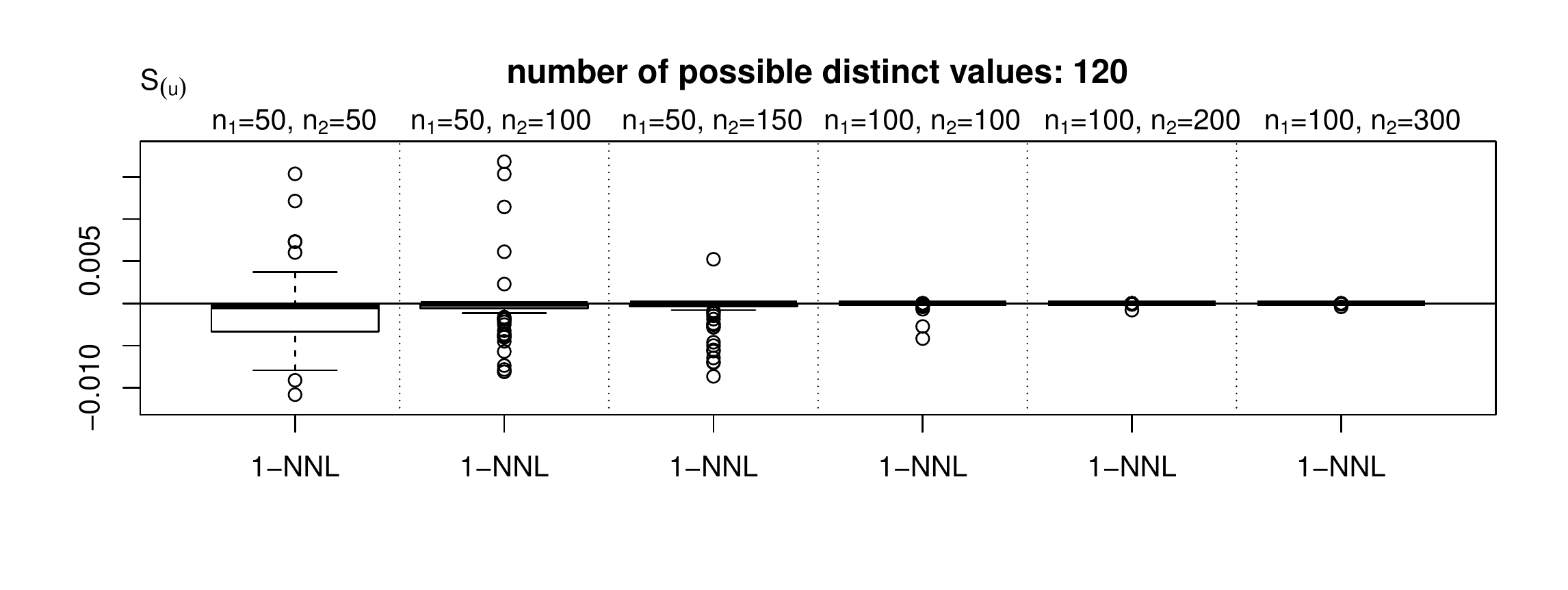}

 \vspace{-2.2em}
  
  \includegraphics[width=0.84\textwidth]{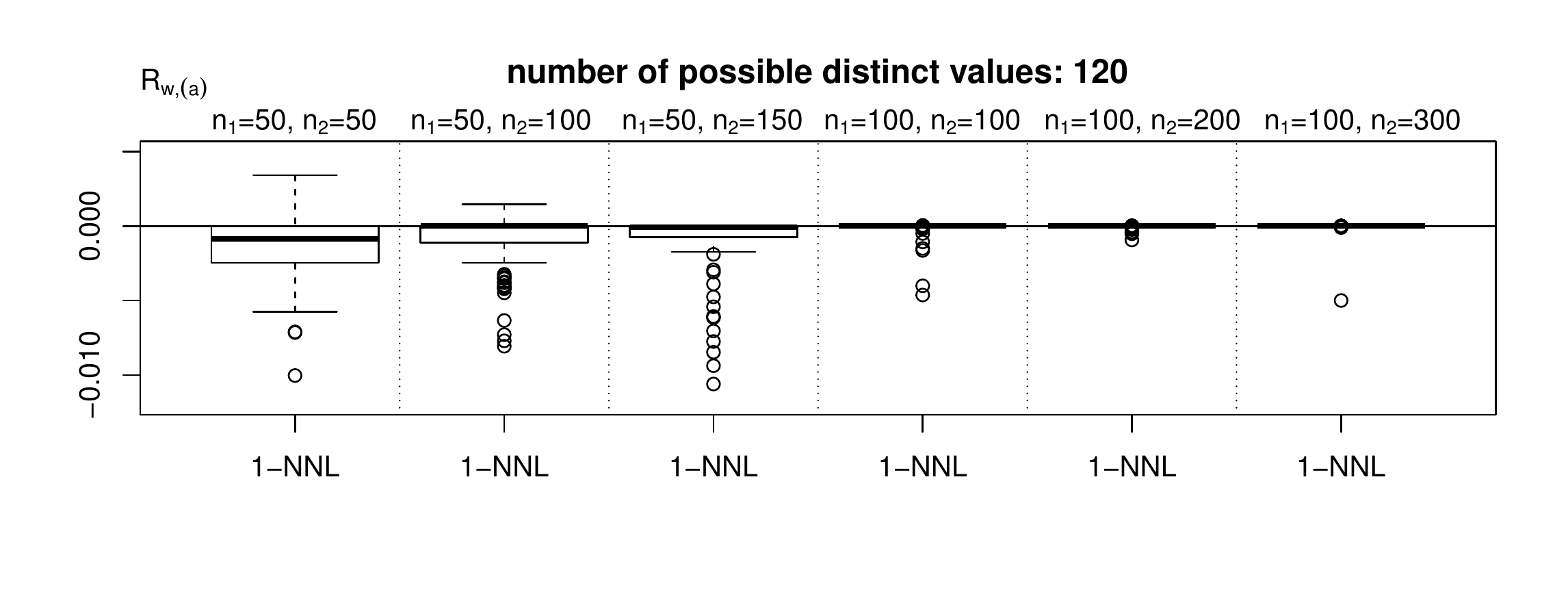}

 \vspace{-2.2em}
  
  \includegraphics[width=0.84\textwidth]{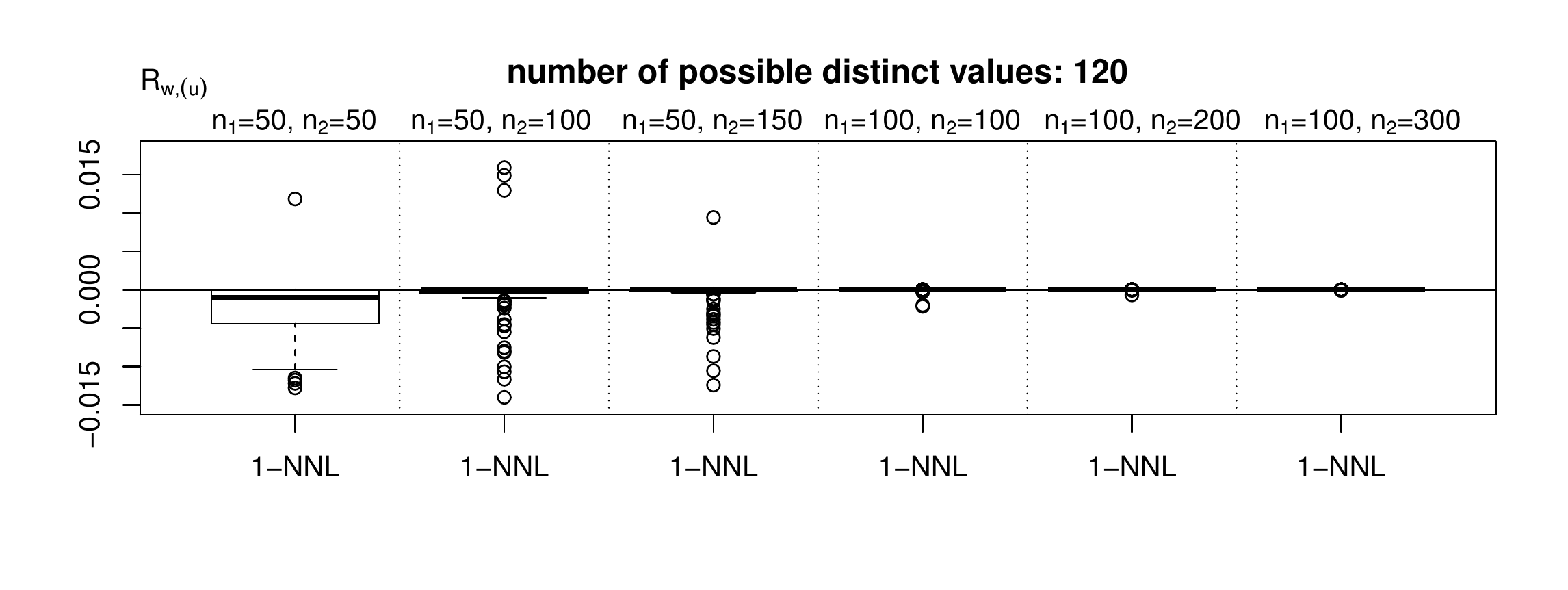}
 
 \vspace{-2.2em}
 
 \caption{Boxplots for the differences between the asymptotic $p$-value and the permutation $p$-value based on 100 simulation runs under each setting for $S_{(a)}, S_{(u)}, R_{w,(a)}$ and $R_{w,(u)}$.}\label{p-value1}
\end{figure}

\begin{figure}[h]
  \centering
  \includegraphics[width=0.84\textwidth]{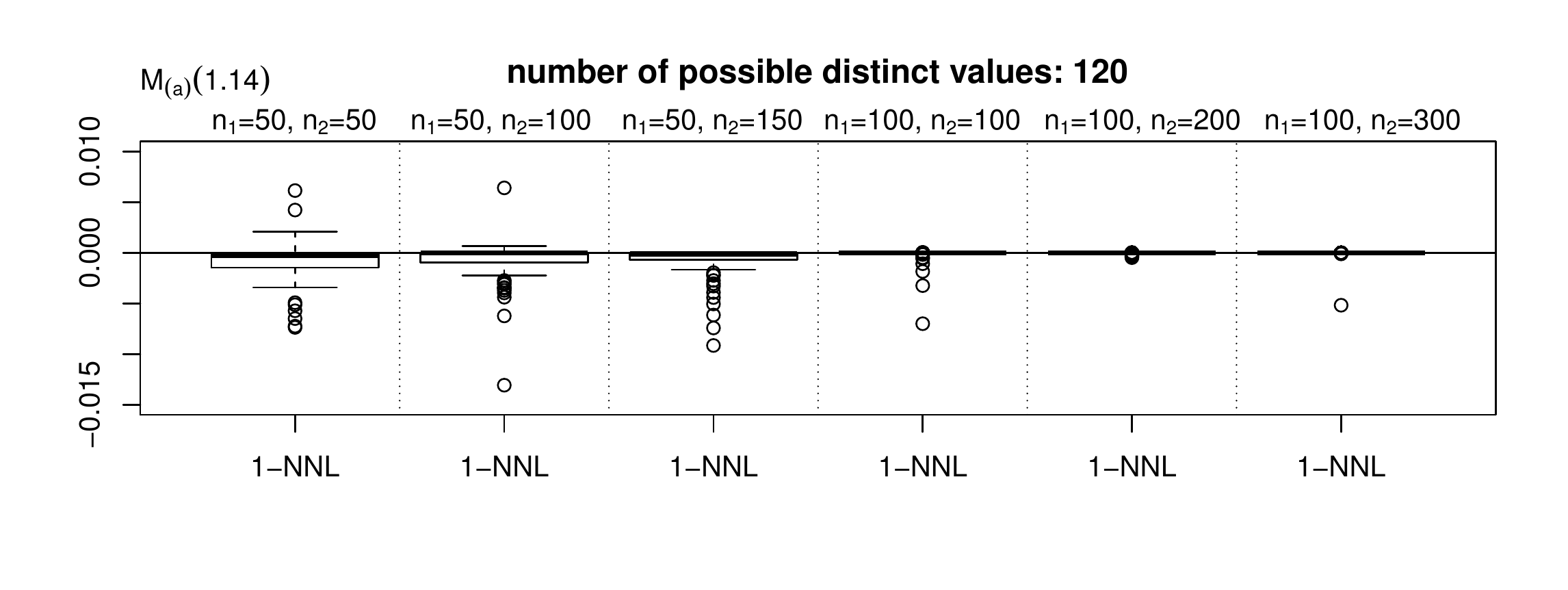}
 
 \vspace{-2.2em}
  
  \includegraphics[width=0.84\textwidth]{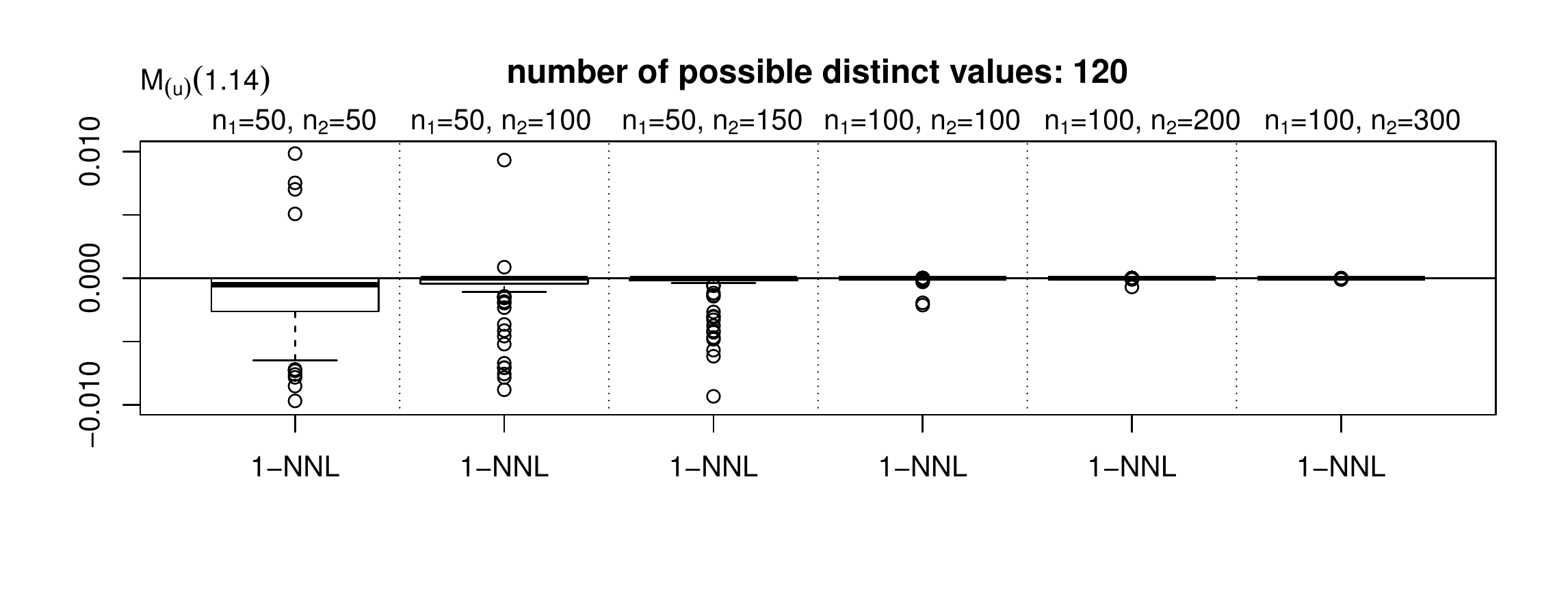}
 
 \vspace{-2.2em}
  
 \caption{Boxplots for the differences between the asymptotic $p$-value and the permutation $p$-value based on 100 simulation runs under each setting for $M_{(a)}(1.14)$ and $M_{(u)}(1.14)$.}\label{p-value2}
\end{figure}

%

Figures \ref{p-value1} and \ref{p-value2} show boxplots for the differences of the two $p$-values (asymptotic $p$-value minus permutation $p$-value) with different choices of $n_1$ and $n_2$ for $S_{(a)}, S_{(u)}, R_{w,(a)}, R_{w,(u)}, M_{(a)}(1.14)$ and $M_{(u)}(1.14)$.  (The results for $M_{(a)}(\kappa)$ and $M_{(u)}(\kappa)$ for $\kappa=1.31, 1$ are similar to those with $\kappa=1.14$ and are shown in supplementary materials.)  We see that when both $n_1$ and $n_2$ are over 100, the asymptotic $p$-value is very close to the permutation $p$-value for all new test statistics.

\section{Phone-call network data analysis}
\label{sec:network}
In this section, we analyze the phone-call network data mentioned in Section \ref{sec:intro} in details. We first present the test results of various statistics, and then examine the analytic $p$-value approximations through this real data example.

The MIT Media Laboratory conducted a study following 106 subjects, including students and staffs in an institute, who used mobile phones with pre-installed software that can record call logs. The study lasted from July 2004 to June 2005 (\cite{eagle2009}). Given the richness of this dataset, many problems can be studied. One question of interest is whether phone call patterns on weekdays are different from those on weekends. The phone calls on weekdays and weekends can be viewed as representations of professional relationship and personal relationship, respectively.

We bin the phone calls by day and, for each day, construct a directed phone-call network with the 106 subjects as nodes and a directed edge pointing from person $i$ to person $j$ if person $i$ made one call to person $j$ on that day. We encode the directed network of each day by an adjacency matrix, with 1 for element $[i,j]$ if there is a directed edge pointing from subject $i$ to subject $j$, and 0 otherwise.

The original dataset was sorted in the calendar order with 236 weekdays and 94 weekends.
Among the 330 (236+94) networks, 
there are 285 distinct values and 11 of them have more than one observations. We denote the distinct values as matrices $B_1,\cdots,B_{285}$. 
We adopt the distance measure used in \cite{chen2017new} and \cite{chen2018weighted}, which is defined as the number of different entries, i.e., 
\[
 d(B_i,B_j)=\|B_i-B_j\|_F^2, 
\]
where $\|\cdot\|_F$ is the Frobenius norm of a matrix.
Besides the repeated observations, there are many equal distances among distinct values.  We set $C_0$ to be the 3-NNL, which has similar density as the 9-MST recommended in \cite{chen2018weighted}.

\begin{table}[!h]
  \centering
  \caption{Breakdown statistics of the phone-call network data.}\label{AppT1}
  \begin{tabular}{c|cccc}
    \hline\hline
     & Value & Mean & Value-Mean & SD \\ \hline\hline
    $R_{1,(a)}$ & 2800.26 & 2669.56 & 130.70 & 143.33 \\ \hline
    $R_{2,(a)}$ & 409.18 & 420.80 & -11.62 & 57.75 \\ \hline
    $(R_{1,(a)}+R_{2,(a)})/2$ & 1604.72 & 1545.18 & 59.54 & 44.74 \\ \hline
    $R_{w,(a)}$ & 1087.14 & 1058.40 & 28.73 & 11.79\\ \hline
    $R_{d,(a)}$ & 2391.08 & 2248.76 & 142.32 & 199.37 \\ \hline\hline
  \end{tabular}

  \vspace{0.3cm}
  \begin{tabular}{c|cccc}
    \hline\hline
     & Value & Mean & Value-Mean & SD \\ \hline\hline
    $R_{1,(u)}$ & 7163.00 & 6860.35 & 302.65 & 381.50\\ \hline
    $R_{2,(u)}$ & 1008.00 & 1081.38  & -73.38 & 151.66 \\ \hline
    $(R_{1,(u)}+R_{2,(u)})/2$ & 4085.50 & 3970.86 & 114.64 & 116.22 \\ \hline
    $R_{w,(u)}$ & 2753.17 & 2719.93  & 33.24 & 15.65\\ \hline
    $R_{d,(u)}$ & 6155.00 & 5778.97 &  376.03 & 532.03 \\ \hline\hline
  \end{tabular}

  \vspace{0.3cm}
  \begin{tabular}{c|c|cc||c|c|cc}
    \hline\hline
    \multicolumn{2}{c|}{} & Value & $p$-Value & \multicolumn{2}{c|}{} & Value & $p$-Value \\ \hline
    \multicolumn{2}{c|}{$Z_{0,(a)}$} & -1.33 & 0.092 & \multicolumn{2}{c|}{$Z_{0,(u)}$} & -0.99 & 0.162 \\ \hline
    \multicolumn{2}{c|}{$S_{(a)}$} & 6.45 & 0.040 & \multicolumn{2}{c|}{$S_{(u)}$} & 5.01 & 0.082 \\ \hline
    \multicolumn{2}{c|}{$Z_{w,(a)}$} & 2.44 & {0.007} & \multicolumn{2}{c|}{$Z_{w,(u)}$} & 2.12 & {0.017} \\ \hline
    \multicolumn{2}{c|}{$|Z_{d,(a)}|$} & 0.71 & 0.475 & \multicolumn{2}{c|}{$|Z_{d,(u)}|$} & 0.71 & 0.480 \\ \hline
    \multirow{3}*{$M_{(a)}(\kappa)$} & $\kappa=1.31$ & 3.19 & 0.009 & \multirow{3}*{$M_{(u)}(\kappa)$} & $\kappa=1.31$ & 2.78 & 0.022 \\ \cline{2-4}\cline{6-8}
    & $\kappa=1.14$ & 2.78 & 0.013 & & $\kappa=1.14$ & 2.42 & 0.032 \\ \cline{2-4}\cline{6-8}
    & $\kappa=1$ & 2.44 & 0.022 & & $\kappa=1$ & 2.12 & 0.050 \\
    \hline\hline
  \end{tabular}
\end{table}

Table \ref{AppT1} lists the results. In particular, we list the values, expectation (Mean) and standard deviations (SD) of $R_{1,(a)}$, $R_{1,(u)}$, $R_{2,(a)}$, $R_{2,(u)}$, $(R_{1,(a)}+R_{2,(a)})/2$, $(R_{1,(u)}+R_{2,(u)})/2$, $R_{w,(a)}$, $R_{w,(u)}$, $R_{d,(a)}$ and $R_{d,(u)}$, as well as the values and $p$-values of $Z_{0,(a)},~Z_{0,(u)},$ $S_{(a)},~S_{(u)},$ $Z_{w,(a)},~Z_{w,(u)},$ $|Z_{d,(a)}|,~|Z_{d,(u)}|$, $M_{(a)}(\kappa)$ and $M_{(u)}(\kappa)$, where $Z_{0,(a)}$ and $Z_{0,(u)}$ are standardizations for $R_{0,(a)}$ and $R_{0,(u)}$, respectively. Note that the tests based on $(R_{1,(a)}+R_{2,(a)})/2$, and $(R_{1,(u)}+R_{2,(u)})/2$ are equivalent to those based on $R_{0,(a)}$ and $R_{0,(u)}$, respectively.

 We first check results based on ``averaging''. We can see that $R_{1,(a)}$ is much higher than its expectation, while $R_{2,(a)}$ is smaller than its expectation.
 The original edge-count test $R_{0,(a)}$ is equivalent to adding $R_{1,(a)}$ and $R_{2,(a)}$ directly, so the signal in $R_{1,(a)}$ is diluted by $R_{2,(a)}$.  In addition, due to the variance boosting issue, it fails to reject the null hypothesis at 0.05 significance level. On the other hand, the weighted edge-count test chooses the proper weight to minimize the variance and performs well. Since $S_{(a)}$ and $M_{(a)}(\kappa)$ consider the weighted edge-count statistic and the difference of two with-in sample edge-counts simultaneously, these tests all reject the null at 0.05 significance level. The larger the $\kappa$ is, the more similar the max-type test ($M_{(a)}(\kappa)$) and the weighted test ($R_{w,(a)}$) are. So the $p$-values of $M_{(a)}(\kappa)$ are very close to that of $R_{w,(a)}$, when $\kappa$ is large. The results on the ``union'' counterparts are similar, except that $S_{(u)}$ cannot reject the null at 0.05 significance level.  Based on the information in the table, it is clear that there is mean difference between the two samples, while no significant scale difference between the two samples.
 
 We also check the analytic $p$-values obtained based on asymptotical results with those based on 10,000 random permutations and the results are shown in Table \ref{pvalue:real1}.  We can see that the asymptotic $p$-values and the permutation $p$-values are quite close for all test statistics.

\begin{table}[!h]
  \centering
  \caption{The $p$-value obtained from the asymptotic results (Asym.) and from doing 10,000 random permutations (Perm.) for different statistics.}\label{pvalue:real1}
  \vspace{0.3cm}
  \begin{tabular}{|c|cc||c|cc|}
    \hline
    $p$-value & Asym. & Perm. & $p$-value & Asym. & Perm.  \\ \hline
    $S_{(a)}$ & 0.040 &  0.042 & $S_{(u)}$ & 0.082 & 0.086 \\
    $R_{w,(a)}$ & 0.007 & 0.013 & $R_{w,(u)}$ & 0.017 & 0.024 \\
    $M_{(a)}(1.31)$ & 0.009 &  0.014 & $M_{(u)}(1.31)$ & 0.022 & 0.026 \\
    $M_{(a)}(1.14)$ & 0.013 &  0.019 & $M_{(u)}(1.14)$ & 0.032 & 0.034\\
    $M_{(a)}(1)$ & 0.022 & 0.025 & $M_{(u)}(1)$ & 0.050 & 0.049 \\
    \hline
  \end{tabular}
\end{table}

\section{Conclusion}
\label{sec:discussion}
The generalized edge-count test and the weighted edge-count test are useful tools in two-sample testing regime. Both tests rely on a similarity graph constructed on the pooled observations from the two samples and can be applied to various data types as long as a reasonable similarity measure on the sample space can be defined. However, they are problematic when the similarity graph is not uniquely defined, which is common for data with repeated observations. In this work, we extend them as well as a max-type statistic, to accommodate scenarios when the similarity graph cannot be uniquely defined. The extended test statistics are equipped with easy-to-evaluate analytic expressions, making them easy to compute in real data analysis. The asymptotic distributions of the extended test statistics are also derived and simulation studies show that the $p$-values obtained based on asymptotic distributions are quite accurate under sample sizes in hundreds and beyond, making these tests easy-off-the-shelf tools for large data sets.

Among the extended edge-count tests, the extended weighted edge-count tests aim for location alternatives, and the extended generalized/max-type edge-count tests aim for more general alternatives.  When these tests do not reach a consensus, a detailed analysis illustrated by the phone-call network data in Section \ref{sec:network} is recommended.

\begin{supplement}
\sname{Supplement to ``Graph-based two-sample tests for data with repeated observations"} 
\slink[url]{http://www.e-publications.org/ims/support/dowload/imsart-ims.zip}
\sdescription{The supplementary material contains proofs of lemmas and theorems, and some additional results.}
\end{supplement}

\section*{Acknowledgments}
Jingru Zhang is supported in part by the CSC scholarship. Hao Chen is supported in part by NSF award DMS-1513653.

\bibliographystyle{imsart-nameyear}
\bibliography{test}

\begin{thebibliography}{21}

\bibitem[\protect\citeauthoryear{Bai and Saranadasa}{1996}]{bai1996}
\begin{barticle}[author]
\bauthor{\bsnm{Bai},~\bfnm{Zhidong}\binits{Z.}} \AND
  \bauthor{\bsnm{Saranadasa},~\bfnm{Hewa}\binits{H.}}
(\byear{1996}).
\btitle{Effect of high dimension: by an example of a two sample problem}.
\bjournal{Statistica Sinica}
\bpages{311--329}.
\end{barticle}
\endbibitem

\bibitem[\protect\citeauthoryear{Cai, Liu and Xia}{2013}]{tony2013}
\begin{barticle}[author]
\bauthor{\bsnm{Cai},~\bfnm{Tony}\binits{T.}},
  \bauthor{\bsnm{Liu},~\bfnm{Weidong}\binits{W.}} \AND
  \bauthor{\bsnm{Xia},~\bfnm{Yin}\binits{Y.}}
(\byear{2013}).
\btitle{Two-sample covariance matrix testing and support recovery in
  high-dimensional and sparse settings}.
\bjournal{Journal of the American Statistical Association}
\bvolume{108}
\bpages{265--277}.
\end{barticle}
\endbibitem

\bibitem[\protect\citeauthoryear{Chen, Chen and Su}{2018}]{chen2018weighted}
\begin{barticle}[author]
\bauthor{\bsnm{Chen},~\bfnm{Hao}\binits{H.}},
  \bauthor{\bsnm{Chen},~\bfnm{Xu}\binits{X.}} \AND
  \bauthor{\bsnm{Su},~\bfnm{Yi}\binits{Y.}}
(\byear{2018}).
\btitle{A weighted edge-count two-sample test for multivariate and object
  data}.
\bjournal{Journal of the American Statistical Association}
\bvolume{113}
\bpages{1146--1155}.
\end{barticle}
\endbibitem

\bibitem[\protect\citeauthoryear{Chen and Friedman}{2017}]{chen2017new}
\begin{barticle}[author]
\bauthor{\bsnm{Chen},~\bfnm{Hao}\binits{H.}} \AND
  \bauthor{\bsnm{Friedman},~\bfnm{Jerome~H}\binits{J.~H.}}
(\byear{2017}).
\btitle{A new graph-based two-sample test for multivariate and object data}.
\bjournal{Journal of the American statistical association}
\bvolume{112}
\bpages{397--409}.
\end{barticle}
\endbibitem

\bibitem[\protect\citeauthoryear{Chen et~al.}{2010}]{chensongxi2010}
\begin{barticle}[author]
\bauthor{\bsnm{Chen},~\bfnm{Song~Xi}\binits{S.~X.}},
  \bauthor{\bsnm{Qin},~\bfnm{Ying-Li}\binits{Y.-L.}} \betal{et~al.}
(\byear{2010}).
\btitle{A two-sample test for high-dimensional data with applications to
  gene-set testing}.
\bjournal{The Annals of Statistics}
\bvolume{38}
\bpages{808--835}.
\end{barticle}
\endbibitem

\bibitem[\protect\citeauthoryear{Chen and Zhang}{2013}]{chen2013}
\begin{barticle}[author]
\bauthor{\bsnm{Chen},~\bfnm{Hao}\binits{H.}} \AND
  \bauthor{\bsnm{Zhang},~\bfnm{Nancy~R}\binits{N.~R.}}
(\byear{2013}).
\btitle{Graph-based tests for two-sample comparisons of categorical data}.
\bjournal{Statistica Sinica}
\bpages{1479--1503}.
\end{barticle}
\endbibitem

\bibitem[\protect\citeauthoryear{Chu et~al.}{2019}]{chu2019asymptotic}
\begin{barticle}[author]
\bauthor{\bsnm{Chu},~\bfnm{Lynna}\binits{L.}},
  \bauthor{\bsnm{Chen},~\bfnm{Hao}\binits{H.}} \betal{et~al.}
(\byear{2019}).
\btitle{Asymptotic distribution-free change-point detection for multivariate
  and non-Euclidean data}.
\bjournal{The Annals of Statistics}
\bvolume{47}
\bpages{382--414}.
\end{barticle}
\endbibitem

\bibitem[\protect\citeauthoryear{Eagle, Pentland and Lazer}{2009}]{eagle2009}
\begin{barticle}[author]
\bauthor{\bsnm{Eagle},~\bfnm{Nathan}\binits{N.}},
  \bauthor{\bsnm{Pentland},~\bfnm{Alex~Sandy}\binits{A.~S.}} \AND
  \bauthor{\bsnm{Lazer},~\bfnm{David}\binits{D.}}
(\byear{2009}).
\btitle{Inferring friendship network structure by using mobile phone data}.
\bjournal{Proceedings of the national academy of sciences}
\bvolume{106}
\bpages{15274--15278}.
\end{barticle}
\endbibitem

\bibitem[\protect\citeauthoryear{Friedman and Rafsky}{1979}]{friedman1979}
\begin{barticle}[author]
\bauthor{\bsnm{Friedman},~\bfnm{Jerome~H}\binits{J.~H.}} \AND
  \bauthor{\bsnm{Rafsky},~\bfnm{Lawrence~C}\binits{L.~C.}}
(\byear{1979}).
\btitle{Multivariate generalizations of the Wald-Wolfowitz and Smirnov
  two-sample tests}.
\bjournal{The Annals of Statistics}
\bpages{697--717}.
\end{barticle}
\endbibitem

\bibitem[\protect\citeauthoryear{Henze}{1988}]{henze1988}
\begin{barticle}[author]
\bauthor{\bsnm{Henze},~\bfnm{Norbert}\binits{N.}}
(\byear{1988}).
\btitle{A multivariate two-sample test based on the number of nearest neighbor
  type coincidences}.
\bjournal{The Annals of Statistics}
\bpages{772--783}.
\end{barticle}
\endbibitem

\bibitem[\protect\citeauthoryear{Li and Chen}{2012}]{li2012}
\begin{barticle}[author]
\bauthor{\bsnm{Li},~\bfnm{Jun}\binits{J.}} \AND
  \bauthor{\bsnm{Chen},~\bfnm{Song~Xi}\binits{S.~X.}}
(\byear{2012}).
\btitle{Two sample tests for high-dimensional covariance matrices}.
\bjournal{The Annals of Statistics}
\bvolume{40}
\bpages{908--940}.
\end{barticle}
\endbibitem

\bibitem[\protect\citeauthoryear{Mallows}{1957}]{mallows1957}
\begin{barticle}[author]
\bauthor{\bsnm{Mallows},~\bfnm{Colin~L}\binits{C.~L.}}
(\byear{1957}).
\btitle{Non-null ranking models. I}.
\bjournal{Biometrika}
\bvolume{44}
\bpages{114--130}.
\end{barticle}
\endbibitem

\bibitem[\protect\citeauthoryear{Rosenbaum}{2005}]{rosenbaum2005}
\begin{barticle}[author]
\bauthor{\bsnm{Rosenbaum},~\bfnm{Paul~R}\binits{P.~R.}}
(\byear{2005}).
\btitle{An exact distribution-free test comparing two multivariate
  distributions based on adjacency}.
\bjournal{Journal of the Royal Statistical Society: Series B (Statistical
  Methodology)}
\bvolume{67}
\bpages{515--530}.
\end{barticle}
\endbibitem

\bibitem[\protect\citeauthoryear{Schilling}{1986}]{schilling1986}
\begin{barticle}[author]
\bauthor{\bsnm{Schilling},~\bfnm{Mark~F}\binits{M.~F.}}
(\byear{1986}).
\btitle{Multivariate two-sample tests based on nearest neighbors}.
\bjournal{Journal of the American Statistical Association}
\bvolume{81}
\bpages{799--806}.
\end{barticle}
\endbibitem

\bibitem[\protect\citeauthoryear{Schott}{2007}]{schott2007}
\begin{barticle}[author]
\bauthor{\bsnm{Schott},~\bfnm{James~R}\binits{J.~R.}}
(\byear{2007}).
\btitle{A test for the equality of covariance matrices when the dimension is
  large relative to the sample sizes}.
\bjournal{Computational Statistics \& Data Analysis}
\bvolume{51}
\bpages{6535--6542}.
\end{barticle}
\endbibitem

\bibitem[\protect\citeauthoryear{Srivastava and Du}{2008}]{srivastava2008}
\begin{barticle}[author]
\bauthor{\bsnm{Srivastava},~\bfnm{Muni~S}\binits{M.~S.}} \AND
  \bauthor{\bsnm{Du},~\bfnm{Meng}\binits{M.}}
(\byear{2008}).
\btitle{A test for the mean vector with fewer observations than the dimension}.
\bjournal{Journal of Multivariate Analysis}
\bvolume{99}
\bpages{386--402}.
\end{barticle}
\endbibitem

\bibitem[\protect\citeauthoryear{Srivastava and
  Yanagihara}{2010}]{srivastava2010}
\begin{barticle}[author]
\bauthor{\bsnm{Srivastava},~\bfnm{Muni~S}\binits{M.~S.}} \AND
  \bauthor{\bsnm{Yanagihara},~\bfnm{Hirokazu}\binits{H.}}
(\byear{2010}).
\btitle{Testing the equality of several covariance matrices with fewer
  observations than the dimension}.
\bjournal{Journal of Multivariate Analysis}
\bvolume{101}
\bpages{1319--1329}.
\end{barticle}
\endbibitem

\bibitem[\protect\citeauthoryear{Tang et~al.}{2017}]{tang2017semiparametric}
\begin{barticle}[author]
\bauthor{\bsnm{Tang},~\bfnm{Minh}\binits{M.}},
  \bauthor{\bsnm{Athreya},~\bfnm{Avanti}\binits{A.}},
  \bauthor{\bsnm{Sussman},~\bfnm{Daniel~L}\binits{D.~L.}},
  \bauthor{\bsnm{Lyzinski},~\bfnm{Vince}\binits{V.}},
  \bauthor{\bsnm{Park},~\bfnm{Youngser}\binits{Y.}} \AND
  \bauthor{\bsnm{Priebe},~\bfnm{Carey~E}\binits{C.~E.}}
(\byear{2017}).
\btitle{A semiparametric two-sample hypothesis testing problem for random
  graphs}.
\bjournal{Journal of Computational and Graphical Statistics}
\bvolume{26}
\bpages{344--354}.
\end{barticle}
\endbibitem

\bibitem[\protect\citeauthoryear{Tony~Cai, Liu and Xia}{2014}]{tony2014}
\begin{barticle}[author]
\bauthor{\bsnm{Tony~Cai},~\bfnm{T}\binits{T.}},
  \bauthor{\bsnm{Liu},~\bfnm{Weidong}\binits{W.}} \AND
  \bauthor{\bsnm{Xia},~\bfnm{Yin}\binits{Y.}}
(\byear{2014}).
\btitle{Two-sample test of high dimensional means under dependence}.
\bjournal{Journal of the Royal Statistical Society: Series B (Statistical
  Methodology)}
\bvolume{76}
\bpages{349--372}.
\end{barticle}
\endbibitem

\bibitem[\protect\citeauthoryear{Xia, Cai and Cai}{2015}]{xia2015}
\begin{barticle}[author]
\bauthor{\bsnm{Xia},~\bfnm{Yin}\binits{Y.}},
  \bauthor{\bsnm{Cai},~\bfnm{Tianxi}\binits{T.}} \AND
  \bauthor{\bsnm{Cai},~\bfnm{T.~Tony}\binits{T.~T.}}
(\byear{2015}).
\btitle{Testing differential networks with applications to the detection of
  gene-gene interactions}.
\bjournal{Biometrika}
\bvolume{102}
\bpages{247--266}.
\end{barticle}
\endbibitem

\bibitem[\protect\citeauthoryear{Xu et~al.}{2016}]{xu2016}
\begin{barticle}[author]
\bauthor{\bsnm{Xu},~\bfnm{Gongjun}\binits{G.}},
  \bauthor{\bsnm{Lin},~\bfnm{Lifeng}\binits{L.}},
  \bauthor{\bsnm{Wei},~\bfnm{Peng}\binits{P.}} \AND
  \bauthor{\bsnm{Pan},~\bfnm{Wei}\binits{W.}}
(\byear{2016}).
\btitle{An adaptive two-sample test for high-dimensional means}.
\bjournal{Biometrika}
\bvolume{103}
\bpages{609--624}.
\end{barticle}
\endbibitem

\end{thebibliography}

\appendix

\section{Analytic expressions of the expectation and variance for the extended basic quantities} \label{sec:AppA}
\begin{lemma}\label{lem1}
  The means, variances and covariance of $R_{1,(a)}$ and $R_{2,(a)}$ under the permutation null are
  \begin{flalign*}
  \begin{split}
  & \qquad\EP(R_{1,(a)}) = (N-K+|C_0|)p_1, \\
 & \qquad\EP(R_{2,(a)}) = (N-K+|C_0|)q_1,
 \end{split}&
  \end{flalign*}
  \vspace{-0.5cm}
  \begin{flalign*}
  \begin{split}
    \qquad\varP(R_{1,(a)}) = & 4(p_2-p_3)(N-K+2|C_0|+\sum_{u}\frac{|\e_u^{C_0}|^2}{4m_u}-\sum_{u}\frac{|\e_u^{C_0}|}{m_u})\\
  & +(p_3-p_1^2)(N-K+|C_0|)^2+(p_1-2p_2+p_3)\sum_{(u,v)\in C_0}\frac{1}{m_um_v}\\
  & +2(p_1-4p_2+3p_3)(K-\sum_{u}\frac{1}{m_u}),
  \end{split}&
  \end{flalign*}
  \vspace{-0.5cm}
  \begin{flalign*}
  \begin{split}
     \qquad\varP(R_{2,(a)}) = & 4(q_2-q_3)(N-K+2|C_0|+\sum_{u}\frac{|\e_u^{C_0}|^2}{4m_u}-\sum_{u}\frac{|\e_u^{C_0}|}{m_u})\\
  & +(q_3-q_1^2)(N-K+|C_0|)^2+(q_1-2q_2+q_3)\sum_{(u,v)\in C_0}\frac{1}{m_um_v}\\
  & +2(q_1-4q_2+3q_3)(K-\sum_{u}\frac{1}{m_u}),
  \end{split}&
  \end{flalign*}
  \vspace{-0.5cm}
  \begin{flalign*}
  \begin{split}
     \qquad\covP(R_{1,(a)},R_{2,(a)}) = & (f_1-p_1q_1)(N-K+|C_0|)^2 \\
    & + f_1\Bigg[-4(N-K+2|C_0|+\sum_{u}\frac{|\e_u^{C_0}|^2}{4m_u}-\sum_{u}\frac{|\e_u^{C_0}|}{m_u}) \\
    & + 6(K-\sum_{u}\frac{1}{m_u}) + \sum_{(u,v)\in C_0}\frac{1}{m_um_v}\Bigg],
    \end{split}&
  \end{flalign*}
  where
  \[
  p_1 = \frac{n_1(n_1-1)}{N(N-1)},\quad p_2 = \frac{n_1(n_1-1)(n_1-2)}{N(N-1)(N-2)},\quad p_3 = \frac{n_1(n_1-1)(n_1-2)(n_1-3)}{N(N-1)(N-2)(N-3)},
  \]
  \[
  q_1 = \frac{n_2(n_2-1)}{N(N-1)},\quad q_2 = \frac{n_2(n_2-1)(n_2-2)}{N(N-1)(N-2)},\quad q_3 = \frac{n_2(n_2-1)(n_2-2)(n_2-3)}{N(N-1)(N-2)(N-3)},
  \]
  \[
  f_1 = \frac{n_1(n_1-1)n_2(n_2-1)}{N(N-1)(N-2)(N-3)}.
  \]
\end{lemma}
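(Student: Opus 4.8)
The plan is to write both $R_{1,(a)}$ and $R_{2,(a)}$, as given in Lemma \ref{lemma:basic}, as weighted quadratic forms in sample-membership indicators, and then reduce every moment to the elementary permutation-null moments that define $p_1,p_2,p_3,q_1,q_2,q_3,f_1$. First I would introduce, for each observation $i$, the indicator $g_i=1$ if $i$ is in sample~1 and $g_i=0$ otherwise, with $h_i=1-g_i$. Writing $S_u$ for the set of observations carrying distinct value $u$ (so $|S_u|=m_u$ and $\sum_{i\in S_u}g_i=n_{1u}$), the identities $n_{1u}(n_{1u}-1)=\sum_{i\neq j\in S_u}g_ig_j$ and $n_{1u}n_{1v}=\sum_{i\in S_u,\,j\in S_v}g_ig_j$ convert the expression in Lemma \ref{lemma:basic} into
$$R_{1,(a)} = \sum_{\{i,j\}} c_{ij}\, g_ig_j, \qquad c_{ij} = \begin{cases} 2/m_u & i,j\in S_u,\\ 1/(m_um_v) & i\in S_u,\ j\in S_v,\ (u,v)\in C_0,\\ 0 & \text{otherwise,}\end{cases}$$
the sum running over unordered pairs, and the same weights give $R_{2,(a)}=\sum_{\{i,j\}}c_{ij}\,h_ih_j$. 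Under the permutation null the multivariate hypergeometric law yields, for distinct indices, $\E(g_ig_j)=p_1$, $\E(g_ig_jg_k)=p_2$, $\E(g_ig_jg_kg_l)=p_3$, the sample-2 analogues $q_1,q_2,q_3$, and the mixed moment $\E(g_ig_jh_kh_l)=f_1$; these are exactly the constants named in the statement.

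Next I would treat each weighted pair as an edge $e=\{i,j\}$ of weight $c_e$, set $g_e=g_ig_j$, and group the double sum $\E(R_{1,(a)}^2)=\sum_{e,e'}c_ec_{e'}\,\E(g_eg_{e'})$ according to $|e\cap e'|\in\{2,1,0\}$ (distinct edges share at most one vertex). Writing $T_1=\sum_e c_e$, $T_2=\sum_e c_e^2$, and the weighted degree $d_i=\sum_j c_{ij}$, the shared-vertex contribution obeys $\sum_{|e\cap e'|=1}c_ec_{e'}=\sum_i d_i^2-2T_2$, so the three moment levels collapse to
$$\var(R_{1,(a)}) = (p_3-p_1^2)\,T_1^2 + (p_1-2p_2+p_3)\,T_2 + (p_2-p_3)\sum_i d_i^2,$$
with the identical formula for $R_{2,(a)}$ in the $q$-constants, and $\E(R_{1,(a)})=p_1T_1$. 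For the covariance the key simplification is $g_ih_i=0$: any configuration in which the sample-1 edge and the sample-2 edge share a vertex contributes zero, so only disjoint configurations survive, giving $\E(R_{1,(a)}R_{2,(a)})=f_1\,(T_1^2+T_2-\sum_i d_i^2)$ and hence $\cov(R_{1,(a)},R_{2,(a)})=(f_1-p_1q_1)T_1^2+f_1(T_2-\sum_i d_i^2)$.

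Finally I would evaluate the three graph functionals in closed form. Summing weights gives $T_1=\sum_u(m_u-1)+|C_0|=N-K+|C_0|$, matching $\E(R_{1,(a)})$ and the term $B=T_1^2$. Summing squared weights gives $T_2=2\bigl(K-\sum_u 1/m_u\bigr)+\sum_{(u,v)\in C_0}1/(m_um_v)$. Since $d_i=2+(|\e_u^{C_0}|-2)/m_u$ for $i\in S_u$, the handshake identity $\sum_u|\e_u^{C_0}|=2|C_0|$ yields $\sum_i d_i^2=4\bigl(N-K+2|C_0|+\sum_u|\e_u^{C_0}|^2/(4m_u)-\sum_u|\e_u^{C_0}|/m_u\bigr)-4\bigl(K-\sum_u 1/m_u\bigr)$. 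Denoting the four bracketed quantities in the statement by $A,B,C,D$, these read $T_2=C+2D$ and $\sum_i d_i^2=4A-4D$; substituting and collecting the $D$-coefficient, namely $2(p_1-2p_2+p_3)-4(p_2-p_3)=2(p_1-4p_2+3p_3)$, reproduces the stated variance and, via $T_2-\sum_i d_i^2=-4A+6D+C$, the stated covariance verbatim.

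I expect the main obstacle to be the combinatorial bookkeeping in the variance step: correctly isolating the shared-vertex ordered-pair contributions and, in particular, evaluating $\sum_i d_i^2$ so that the weighted node degrees assemble precisely into the functional $A$ with the residual $-4D$ correction. Once these graph sums are pinned down, the remaining algebra over $p_1,p_2,p_3,q_1,q_2,q_3,f_1$ is routine.
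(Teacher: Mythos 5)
Your proposal is correct and follows essentially the same route as the paper's (supplementary) proof: writing $R_{1,(a)}$ and $R_{2,(a)}$ as weighted sums of edge indicators $g_ig_j$ over pairs of observations and classifying ordered pairs of edges by the number of shared nodes, exactly as in the moment computations of \cite{chen2013} and \cite{chen2017new} that the paper invokes. I checked the graph functionals ($T_1=N-K+|C_0|$, $T_2=2(K-\sum_u 1/m_u)+\sum_{(u,v)\in C_0}1/(m_um_v)$, $\sum_i d_i^2=4A-4D$) and the coefficient bookkeeping, and they reproduce the stated formulas verbatim.
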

Lemma \ref{lem1} is proved in supplementary materials..

\begin{lemma}\label{lem2}
  The means, variances and covariance of $R_{1,(u)}$ and $R_{2,(u)}$ under the permutation null are
  \begin{align*}
 & \EP(R_{1,(u)}) = |\bar{G}|p_1,\\
 & \EP(R_{2,(u)}) = |\bar{G}|q_1,\\
 & \varP(R_{1,(u)}) = (p_1-p_3)|\bar{G}|+(p_2-p_3)\sum_{i=1}^{N}|\e_i^{\bar{G}}|(|\e_i^{\bar{G}}|-1)+(p_3-p_1^2)|\bar{G}|^2,\\
 & \varP(R_{2,(u)}) = (q_1-q_3)|\bar{G}|+(q_2-q_3)\sum_{i=1}^{N}|\e_i^{\bar{G}}|(|\e_i^{\bar{G}}|-1)+(q_3-q_1^2)|\bar{G}|^2,\\
  & \covP(R_{1,(u)},R_{2,(u)}) = f_1\left[|\bar{G}|^2-|\bar{G}|-\sum_{i=1}^{N}|\e_i^{\bar{G}}|(|\e_i^{\bar{G}}|-1)\right]-p_1q_1 |\bar{G}|^2.
  \end{align*}
where $p_1,p_2,p_3,q_1,q_2,q_3,f_1$ are defined as those in Lemma \ref{lem1}.
\end{lemma}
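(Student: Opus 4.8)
The plan is to treat $\bar{G}\equiv\bar{G}_{C_0}$ as a \emph{fixed} simple graph on the $N$ observations --- its edge set is the union over $\mathcal{G}_{C_0}$ and does not depend on the sample labels --- so that the only randomness comes from the permutation assignment of labels. Writing $R_{1,(u)}=\sum_{e\in\bar{G}}I_e$ and $R_{2,(u)}=\sum_{e\in\bar{G}}J_e$, where $I_e$ (resp.\ $J_e$) is the indicator that both endpoints of edge $e$ fall in sample $1$ (resp.\ sample $2$), I reduce everything to counting how many labels of each kind land on small configurations of nodes. For a fixed edge, $\E(I_e)=\PP(\text{two given nodes both in sample }1)=p_1$ and $\E(J_e)=q_1$, so linearity of expectation gives $\E(R_{1,(u)})=|\bar{G}|p_1$ and $\E(R_{2,(u)})=|\bar{G}|q_1$ at once.

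For the second moments I would expand $R_{1,(u)}^2=\sum_{e,e'}I_eI_{e'}$ and classify the ordered pairs $(e,e')$ by how many nodes they share. The diagonal $e=e'$ contributes $I_e$ (since $I_e^2=I_e$), giving $|\bar{G}|p_1$. Because $\bar{G}$ is simple, two distinct edges share at most one node: a shared-node pair involves three distinct nodes and contributes $\E(I_eI_{e'})=p_2$, while a node-disjoint pair involves four distinct nodes and contributes $p_3$. The only combinatorial input needed is the number of ordered shared-node pairs, which by the handshake identity equals $\sum_{i=1}^{N}|\e_i^{\bar{G}}|(|\e_i^{\bar{G}}|-1)$; the number of ordered node-disjoint pairs is then the complement $|\bar{G}|(|\bar{G}|-1)-\sum_i|\e_i^{\bar{G}}|(|\e_i^{\bar{G}}|-1)$. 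Collecting these three contributions and subtracting $(\E R_{1,(u)})^2=|\bar{G}|^2p_1^2$ yields the stated $\varP(R_{1,(u)})$ after grouping the coefficients of $|\bar{G}|$, of $\sum_i|\e_i^{\bar{G}}|(|\e_i^{\bar{G}}|-1)$, and of $|\bar{G}|^2$; the formula for $\varP(R_{2,(u)})$ is identical with $p_i\mapsto q_i$.

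For the covariance I would expand $R_{1,(u)}R_{2,(u)}=\sum_{e,e'}I_eJ_{e'}$ and observe that most terms vanish deterministically: if $e=e'$, or if $e$ and $e'$ share a node, then some node would have to lie in both samples simultaneously, forcing $I_eJ_{e'}=0$. Hence only node-disjoint ordered pairs survive, each contributing $\E(I_eJ_{e'})=f_1$, and there are $|\bar{G}|(|\bar{G}|-1)-\sum_i|\e_i^{\bar{G}}|(|\e_i^{\bar{G}}|-1)$ of them. Subtracting $\E(R_{1,(u)})\E(R_{2,(u)})=p_1q_1|\bar{G}|^2$ then gives the claimed $\covP(R_{1,(u)},R_{2,(u)})$.

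The calculation is routine in spirit, being the two-sample, fixed-graph version of the classical Friedman--Rafsky moment computation; the one place to be careful is the edge-pair bookkeeping. Specifically, the identity $\sum_i|\e_i^{\bar{G}}|(|\e_i^{\bar{G}}|-1)$ counts shared-node pairs correctly \emph{only} because $\bar{G}$ is simple, and here $\bar{G}$ genuinely contains the distance-$0$ edges among repeated observations, so the degrees $|\e_i^{\bar{G}}|$ must be read off from this full graph (as recorded in Lemma~\ref{lemma:Rw}). Keeping the diagonal, shared-node, and disjoint cases cleanly separated, and remembering that the cross moment $\E(I_eJ_{e'})$ forces the two edges to be node-disjoint, is the main thing that prevents miscounting; once the three multiplicities are pinned down, the remainder is algebraic regrouping.
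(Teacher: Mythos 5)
Your proposal is correct and follows the same route the paper takes: treat $\bar{G}$ as a fixed simple graph, write $R_{1,(u)}$ and $R_{2,(u)}$ as sums of edge indicators, and compute moments under the permutation null by classifying ordered edge pairs as identical, node-sharing, or disjoint, with multiplicities $|\bar{G}|$, $\sum_{i}|\e_i^{\bar{G}}|(|\e_i^{\bar{G}}|-1)$, and the complement. The cross-moment observation that $I_eJ_{e'}=0$ unless $e$ and $e'$ are node-disjoint is exactly the point that produces the stated covariance, so nothing is missing.
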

Lemma \ref{lem2} is proved in supplementary materials.

\end{document}